\title{Parameterized Complexity of Propositional Inclusion and Independence Logic}
\titlerunning{Parameterized Complexity of PINC and PIND}
\author{Yasir Mahmood}{DICE group, Department of Computer Science, Paderborn University, Germany}{yasir.mahmood@uni-paderborn.de}{https://orcid.org/0000-0002-5651-5391}{}{}
\author{Jonni Virtema}{Department of Computer Science, University of Sheffield, United Kingdom}{j.t.virtema@sheffield.ac.uk}{https://orcid.org/0000-0002-1582-3718}{}{}
\authorrunning{Y.~Mahmood and J.~Virtema }
\keywords{Propositional Logic, Team Semantics, Model checking, Satisfiability, Parameterised Complexity}
\begin{document}
\maketitle

\begin{abstract}
 We give a comprehensive account on the parameterized complexity of model checking and satisfiability of propositional inclusion and independence logic. We discover that for most parameterizations the problems are either in FPT or paraNP-complete.
 
 \iffalse
 In this work we analyse the parameterised complexity of propositional inclusion ($\pinc$) and independence  logic ($\pind$). 
 The problems of interest are model checking ($\MC$) and satisfiability ($\SAT$). 
The complexity of these problems is well understood in the classical (non-parameterised) setting.
Mahmood and Meier (FoIKS 2020) recently studied the parameterised complexity of propositional dependence logic ($\pdl$). 
As a continuation of their work, we classify inclusion and independence logic and thereby come closer to completing the picture with respect to the parametrised complexity for the three most studied logics in the propositional team semantics setting. 
We present results for each problem with respect to $8$ different parameterisations.
It turns out that for a team based logic $\mathcal L$ such that $\mathcal L$-atoms can be evaluated in polynomial time, the $\MC$ parameterised by teamsize is $\FPT$.
As a corollary, we get an $\FPT$-membership under the following parameterisations: formula-size, formula-depth, treewidth, and number of variables.
The parameter teamsize shows interesting behavior for $\SAT$. 
For $\pinc$ the parameter teamsize is not meaningful, whereas for $\pdl$ and $\pind$ the satisfiability is $\para\NP$-complete.
Finally, we prove that when parameterised by arity, both $\MC$ and $\SAT$ are $\para\NP$-complete for each of the considered logics.
\fi

\end{abstract}
%\todo[inline]{5 (Jonni). It would be nice to add an example that differentiates the top-down Brute-force approach (MC-Inclusion, under $|T|$) from the bottom-up, and argue which one is better.}

\section{Introduction}

%\textcolor{red}{Jonni: Add discussion about propositional setting being able to encode any fixed size data values and expressing properties in this setting.}

%\textcolor{red}{Jonni: Add discussion about strict semantics and how it can be thought as resource sensitive logic. The problem of strict semantics is (non-locality) is fixed when multiplicities are used, and the problems of multiteam semantics do not rise in the propositional setting, when quantifiers are not present.}

The research program on team semantics was conceived in the early 2000s
to create a unified framework to study logical foundations of different notions of dependence between variables.
Soon after the introduction of first-order dependence logic \cite{vaananen07}, the framework was extended to cover propositional and modal logic \cite{vaananen08}. 
In this context, a significant step was taken in \cite{ebbing13}, where the focus shifted to study dependencies between formulas instead of variables. 
The framework of team semantics has been proven to be remarkably malleable. During the past decade the framework has been re-adapted for the needs of an array of disciplines.
In addition to the modal variant, team semantics has been generalized to temporal \cite{KrebsMV15} and probabilistic \cite{HKMV18} frameworks, and fascinating connections to fields such as database theory  \cite{HannulaKV20}, statistics \cite{CoranderHKPV16}, real valued computation \cite{HannulaKBV20}, verification \cite{kmvz18}, and quantum information theory \cite{Hyttinen15b} have been identified.

Boolean satisfiability problem (SAT) and quantified Boolean formula problem (QBF) have had a widespread influence in diverse research communities. 
In particular, QBF solving techniques are important in application domains such as planning, program synthesis and
verification, adversary games, and non-monotonic reasoning, to name a few \cite{DBLP:conf/ictai/ShuklaBPS19}. Further generalizations of QBF are the dependency quantified Boolean formula problem (DQBF) and alternating DQBF which allow richer forms of variable dependence \cite{HLKV16,PETERSON2001957,DBLP:conf/focs/PetersonR79}. Propositional logics with team semantics offer a fresh perspective to study enrichments of SAT and QBF. 
Indeed, the so-called \emph{propositional dependence logic} (\PDL) is known to coincide with DQBF,
whereas quantified propositional logics with team semantics have a close connection to alternating DQBF \cite{HLKV16,DBLP:journals/iandc/Virtema17}.

Propositional dependency logics extend propositional logic with atomic dependency statements describing various forms of variable dependence.
%the most well studied of which are functional dependence, inclusion dependence, and independence.
In this setting, formulas are evaluated over propositional teams (i.e, sets of propositional assignments with common variable domain). 
%The dependence atom $\depa{x}{y}$ is satisfied in a team $T$ if for each pair $s,t\in T$ of assignments, $s(\tuple x)=t(\tuple x)$ implies $s(\tuple y)=t(\tuple y)$. 
%That is, every pair of assignments that fixes $\tuple x$, also fixes $\tuple y$. 
An \emph{inclusion atom} $\inca{x}{y}$ is true in a team $T$, if $\forall s \in T \,\exists t\in T$ such that $s(\tuple x)=t(\tuple y)$. 
An \emph{independence atom} $\indepa{x}{y}{z}$ expresses that in a team $T$, for any fixed value for the variables in $\tuple{z}$ the values for $\tuple{x}$ and $\tuple{y}$ are informationally independent.
%is true in $T$ if $\forall t,t'\in T$ such that $ t(\mathbf z)=t'(\mathbf z)$, $T$ contains an assignment $ t''$ such that $t''(\mathbf{xzy})=t(\mathbf{xz})t'(\mathbf y)$.
The extension of propositional logic with 
%dependence,
inclusion and independence atoms yield
%propositional dependence ($\pdl$),
inclusion ($\pinc$) and independence ($\pind$) logics, respectively.
\begin{example}
	Table~\ref{database:example} illustrates an example from relational databases.
	%This means there is no need to consider the blow-up observed in binary encoding.
	\begin{table}[t]
		\centering
		\begin{tabular}{c@{\; }c@{\; }c@{\; }c@{\; }c@{\; }}\toprule
			\texttt{Instructor} 	& 	\texttt{Time}		& 	\texttt{Room} 	& 	\texttt{Course} & \texttt{Responsible} \\ \toprule
			Antti	& 09:00	& A.10	& Genetics	& Antti 		\\
			Antti	& 11:00	& A.10 	& Chemistry	& Juha  \\
			Antti	& 15:00	& B.20 	& Ecology	& Antti  \\
			Jonni	& 10:00	& C.30 	& Bio-LAB	& Jonni 	 \\
			Juha	& 10:00	& C.30 	& Bio-LAB	& Jonni 	 \\
			Juha	& 13:00	& A.10 	& Chemistry	& Juha \\
			\bottomrule
		\end{tabular} \quad
		\begin{tabular}{c@{\; }c@{\; }c@{\; }c@{\; }c@{\; }}\toprule
			$i_1i_2$ 	& $t_1t_2t_3$ & 	$r_1r_2$ 	& 	$c_1c_2$ & $p_1p_2$\\ \toprule
			00		&	110 	&	11	&	11 & 00	\\
			00		&	111		&	11	&	00 & 10 \\
			00		&	000		&	00	&	01 & 00\\
			01		&	001		&	01	&	10 & 01\\
			10		&	001		&	01	&	10 & 01	\\
			10		&	010		&	11	&	00 & 10	\\
			\bottomrule
		\end{tabular}	
		\caption{(Left) An example database with $5$ attributes and universe size $15$. (Right) An encoding with $3\cdot \lceil\log_2(3)\rceil+\lceil\log_2(5)\rceil+\lceil\log_2(4)\rceil$ many propositional variables.}\label{database:example}
		%\vspace{-7mm}
	\end{table}	
	The set of records corresponds to a team, that satisfies
	%the functional dependency 
	%$\depas{\{\texttt{Time}, \texttt{Room}\}}{\{\texttt{Course}\}}$  
	%$ \depas{\{\texttt{Instructor}, 	\texttt{Time}\}}{\{\texttt{Room}, \texttt{Course}\}}$.
	%Whereas, it does not satisfy $\depa{\{\texttt{Room}, \texttt{Time}\}}{\{\texttt{Instructor}\}}$ as witnessed by the tuples (Juha, C.30, 10, LAB) and (Jonni, C.30, 10, LAB). 
	the dependency $\incas{\texttt{Responsible}}{\texttt{Instructor}}$. Moreover, it
	%However, $T$ violates the functional dependency $\depa{\{\texttt{Time}, \texttt{Room}\}}{\{\texttt{Instructor}\}} $ as witnessed by tuples (Juha, 10, C.30, Bio-LAB, Jonni) and (Jonni, 10, C.30, Bio-LAB, Jonni).
	violates the independence $\indepas{\texttt{Instructor}}{\texttt{Time}}{\texttt{Course}}$ as witnessed by tuples (Antti, 11:00, A.10, Chemistry, Juha) and (Juha, 13:00, A.10, Chemistry, Juha).
	% as witnessed by tuples (Antti, 11.00, A10, Chemistry, Juha) and (Juha, 13.00, A10, Chemistry, Juha).
	%
	%Formally, we have that 
	%\[
	%	T\models \depa{\{\texttt{Room}, 	\texttt{Time}\}}{\{\texttt{Course}\}}\land\depa{\{\texttt{Instructor}, \texttt{Time}\}}{\{\texttt{Room}, \texttt{Course}\}},
	%\]
	%but
	%\[
	%	T\not\models\depa{\{\texttt{Room}, \texttt{Time}\}}{\{\texttt{Instructor}\}}.
	%\]
	%
	% Notice that we only consider propositional logic in this paper.
	%Moreover, we cannot express a table of so many values.
	%As a result, we need to encode in binary the values of each column separately.
	%This might cause a logarithmic blow-up (by binary encoding the universe values for each column) in the parameter values, for example, it influences the number of variables.
	%Furthermore, one also has to rewrite variables in the occurring formulas accordingly.
	%For instance, as in Table~\ref{database:example}, for $\depa{\{\texttt{Room}, \texttt{Time}\}}{\{\texttt{Instructor}\}}$ this would yield the formula $\depa{\{r_1, r_2,t_1,t_2,t_3\}}{\{i_1,i_2\}}$.
	%The parameters discussed in this paper correspond to the already encoded values. 
	%This means that there is no need to consider the blow-up observed in the previous example.
	In propositional logic setting, datavalues can be represented as bit strings of appropriate length (as depicted in Table~\ref{database:example}).
	%In this work, we focus on propositional logics and our parameterizations correspond to the values in propositional setting.
\end{example}

The complexity landscape of the classical (non-parameterized) decision problems --- satisfiability, validity, and model checking --- is well mapped  in the propositional and modal team semantics setting (see 
%Table \ref{table:classical} and 
\cite[page 627]{HellaKMV19} for an overview). 
%We focus on the parameterized complexity of logics in the propositional setting and present a more fine grained analysis of the complexity of these problems.
Parameterized complexity theory,
pioneered by Downey and Fellows \cite{DBLP:series/txcs/DowneyF13},
%provides a fine grained analysis on the true source of hardness of problems. 
%Parameterized complexity
is a widely studied subarea of complexity theory. 
The motivation being that it provides a deeper analysis than the classical complexity theory by providing further insights into the source of intractability.
The idea here is to identify meaningful parameters of inputs such that fixing those makes the problem tractable. One example of a fruitful parameter is the treewidth of a graph. 
A parameterized problem (PP) is called fixed parameter tractable, or in $\FPT$ for short, if for a given input $x$ with parameter $k$, the membership of $x$ in PP can be decided in time $f(k)\cdot p(|x|)$ for some computable function $f$ and polynomial $p$.
That is, for each fixed value of $k$ the problem is tractable in the classical sense of tractability (in $\Ptime$), and the degree of the polynomial is independent of the parameter.
%
%If, on the other hand, PP is solvable in time $p(|x|)^{f(k)}$, then PP lies in the parameterized complexity class $\XP$. 
%It is still true that for each fixed value of $k$, PP is in $\Ptime$, but the degree of the polynomial now depends on the parameter $k$.
%Consequently, one desires for $\FPT$-run time for the problem at hand. 
%Moreover, $\FPT\subsetneq \XP$ \cite{DBLP:series/txtcs/FlumG06} and in between the two classes, there exists a presumably infinite hierarchy of classes, the so-called $\complClFont{W}$-hierarchy: $\FPT\subseteq \W1\subseteq \W2 \subseteq \ldots \subseteq \W\Ptime \subseteq \para\NP$.
%The $\complClFont{W}$-hierarchy intuitively measures the degree of hardness of a problem under the given parameterisation and $\W\Ptime \subseteq \para\NP \cap \XP$.  
The class $\para\NP$ consists of problems decidable in time $f(k)\cdot p(|x|)$ on a non-deterministic machine.

In the propositional team semantics setting, the study of parameterized complexity was initiated by Meier and Reinbold \cite{MeierR18} in the context of parameterized enumeration problems, and by Mahmood and Meier \cite{jpdl} in the context of classical decision problems, for $\PDL$.
In the first-order team semantics setting, Kontinen~et~al. \cite{KontinenMM22} studied parameterized model checking of dependence and independence logic, and in \cite{kmmv23} introduced the \emph{weighted-definability} problem for dependence, inclusion and independence logic thereby establishing a connection with the parameterized complexity classes in the well-known \emph{\W{}-hierarchy}. 

We focus on the parameterized complexity of model checking ($\MC$) and satisfiability ($\SAT$) of propositional inclusion and independence logic.
We consider both \emph{lax} and \emph{strict} semantics.
The former is the prevailing semantics in the team semantics literature. The past rejection of strict semantics was based on the fact that it does not satisfy locality \cite{DBLP:journals/apal/Galliani12} (the locality principle dictates that satisfaction of a formula should be invariant on the truth values of variables that do not occur in the formula).
Recent works have revealed that locality of strict semantics can be recovered by moving to multiteam semantics (here teams are multisets) \cite{DurandHKMV18}. Since, in propositional team semantics, the shift from teams to multiteams has no complexity theoretic implications, we stick with the simpler set based semantics for our logics. 
\begin{table}[t]
	\centering
	%\scalebox{0.8}{
		\begin{tabular}{l@{\; }c@{\; }c@{\; }c@{\; }c@{\; }} 
			\toprule
			&	\multicolumn{2}{l}{\qquad $\pind$}		& \multicolumn{2}{l}{\qquad $\pinc$}			\\
			Parameter 	& $\MC$ 	  &  $\SAT$   & $\MC_s$ 	  &  $\SAT$    \\ \toprule
			\formulatw 	 
			& $\para\NP^{\ref{thm:ind-mc}}$
			& $\FPT ^{\ref{thm:ind-sat}}$   
			& $\para\NP ^{\ref{thm:mc-many}}$
			& in $\para\NP ^{\ref{sat:tw}}$	\\
			\formulateamtw 	 
			& $\FPT^{\ref{MC:cor}}$	
			& - 
			& $\FPT^{\ref{MC:cor}}$	
			& -		\\
			\teamsize  	 	
			& $\FPT^{\ref{MC:general}}$	
			& -
			& $\FPT^{\ref{MC:general}}$	
			& -				\\
			\formula 	 
			& $\FPT ^{\ref{MC:cor}}$	 
			&  Trivial	
			& $\FPT ^{\ref{MC:cor}}$	 
			&  Trivial		\\
			\formuladepth 	 	
			& $\FPT ^{\ref{MC:cor}}$	
			&  $\FPT^{\ref{SAT:general}}$ 			
			& $\FPT ^{\ref{MC:cor}}$	 
			&  $\FPT^{\ref{SAT:general}}$ 	\\
			\variables   	
			& $\FPT^{\ref{MC:cor}}$ 
			& $\FPT^{\ref{SAT:general}}$ 	
			& $\FPT ^{\ref{MC:cor}}$	 
			& $\FPT^{\ref{SAT:general}}$ 	\\
			\splits  	 	
			& $\para\NP ^{\ref{thm:ind-mc}}$
			& $\FPT ^{\ref{thm:ind-sat}}$ 
			& $\para\NP ^{\ref{thm:mc-many}}$
			& $\Ptime ^{\ref{thm:no-split}}$ if $\splits{=}0 $ 	\\
			\arity
			& $\para\NP ^{\ref{thm:ind-mc}}$
			& $\para\NP ^{\ref{thm:ind-sat}}$
			& $\para\NP ^{\ref{thm:mc-many}}$
			& $\para\NP ^{\ref{SAT:inc-arity}}$ \\
			\bottomrule
		\end{tabular}
		%}
	\caption{Overview of parameterized  complexity results with pointers to the results. The $\para\NP$-cases are complete, except for only membership in the first row. $\MC_s$ denotes model checking for strict semantics whereas $\MC/\SAT$ refer to both semantics. }\label{table:para}
	%	\vspace{-7mm}
\end{table}
In the model checking problem, one is given a team $T$ and a formula $\phi$, and the task is to determine whether $T\models \phi$.
In the satisfiability problem, one is given a formula $\phi$, and the task is to decide whether there exists a non-empty satisfying team $T$ for $\phi$.
%Regarding $\pinc$, $\MC$ is either in $\Ptime$ or $\NP$-complete depending on whether %\emph{lax} or \emph{strict} semantics is used. 
%However, $\SAT$ for $\pinc$ is $\EXP$-complete under both semantics.
Table~\ref{table:para} gives an overview of our results. We consider only strict semantics for $\MC$ of $\pinc$, since for lax semantics the problem is tractable already in the non-parameterized setting \cite[Theorem~$3.5$]{HellaKMV19}.\looseness=-1
\section{Preliminaries}
We assume familiarity with standard notions in complexity theory such as classes $\Ptime,\NP$ and $\EXP$ \cite{DBLP:books/daglib/0092426}.
We give a short exposition of relevant concepts from parameterized complexity theory.
For a broader introduction consider the textbook of Downey and Fellows \cite{DBLP:series/txcs/DowneyF13}, or that of Flum and Grohe \cite{DBLP:series/txtcs/FlumG06}. 

A \emph{parameterized problem} (PP) $\Pi\subseteq \Sigma^*\times\mathbb N$ consists of tuples $(x, k)$, where $x$ is called an instance and $k$ is the (value of the) parameter.
%\begin{definition}[$\FPT$ and $\para\NP$]

\vspace{2mm}
\noindent\textbf{$\FPT$ and $\para\NP$.}
Let $\Pi$ be a PP over $\Sigma^*\times\mathbb N$. 
Then $\Pi$ is \emph{fixed parameter tractable} ($\FPT$ for short) if it can be decided by a deterministic algorithm $\mathcal A$ in time $f(k)\cdot p(|x|)$ for any input $(x,k)$, where $f$ is a computable function and $p$ is a polynomial.
If the algorithm $\mathcal A$ is non-deterministic instead, then $\Pi$ belongs to the class $\para\NP$.	

%\end{definition}

%Often one uses the notation $O^\star(f(k))$ to denote the runtime dependence only on the parameter and to ignore the polynomial factor in the input. 
%
%Moreover, we work with classes that can be defined via a precomputation on the parameter.
%\begin{definition}
%	Let $\mathcal C$ be any complexity class.
%	Then $\para\mathcal C$ is the class of all PPs $P\subseteq\Sigma^*\times\mathbb N$ such that there exists a computable function $\pi\colon\mathbb N\to\Delta^*$ and a language $L\in\mathcal C$ with $L\subseteq\Sigma^*\times\Delta^*$ such that for all $(x,k)\in\Sigma^*\times\mathbb N$ we have that $(x,k)\in P \Leftrightarrow (x,\pi(k))\in L$.
%\end{definition}
%\noindent Notice that $\para\complClFont{P}=\FPT$.
%The complexity classes $\mathcal{C}\in\setdefinition{  \NP,\EXP}$ are used in the $\para\mathcal C$ context by us.
%
The notion of hardness in parameterized setting is employed by fpt-reductions.
%\begin{definition}\label{def:fpt-reduction}

\vspace{2mm}
\noindent\textbf{fpt-reductions.}
Let $\Pi\subseteq\Sigma^*\times\mathbb N$ and $\Theta\subseteq\Gamma^*\times\mathbb N$ be two PPs.
Then $\Pi$ is \emph{fpt-reducible} to $\Theta$, if there exists an fpt-computable function  $g\colon\Sigma^*\times\mathbb N\to\Gamma^*\times\mathbb N$ such that (1) for all $(x,k)\in\Sigma^*\times\mathbb N$ we have that $(x,k)\in \Pi\Leftrightarrow g(x,k)\in \Theta$ and (2) there exists a computable function $h\colon\mathbb N\to\mathbb N$ such that for all $(x,k)\in\Sigma^*\times\mathbb N$ and $g(x,k)=(x',k')$ we have that $k'\leq h(k)$.

%	\begin{itemize}
	%		\item for all $(x,k)\in\Sigma^*\times\mathbb N$ we have that $(x,k)\in \Pi\Leftrightarrow f(x,k)\in \Theta$,
	%		\item there exists a computable function $g\colon\mathbb N\to\mathbb N$ such that for all $(x,k)\in\Sigma^*\times\mathbb N$ and $f(x,k)=(x',k')$ we have that $k'\leq g(k)$.
	%	\end{itemize}

We will use the following result %from parameterized complexity theory 
to prove \para\NP-hardness. Let $\Pi$ be a PP over $\Sigma^*\times\N$.
Then the \emph{$\ell$-slice of $\Pi$}, for $\ell\ge0$, is the set $\Pi_\ell\dfn\{\,x\mid (x,\ell)\in \Pi\,\}$. 
\begin{proposition}[{\cite[Theorem~2.14]{DBLP:series/txtcs/FlumG06}}]\label{slice-NP-result}
	Let $\Pi$ be a PP in $\para\NP$.
	If there exists an $\ell\ge0$ such that $\Pi_\ell$ is $\NP$-complete, then $\Pi$ is $\para\NP\complete$.
\end{proposition}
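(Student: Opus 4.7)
The plan is to establish $\para\NP$-hardness of $\Pi$ by constructing, for an arbitrary $\Theta\in\para\NP$, an fpt-reduction from $\Theta$ to $\Pi$ whose image lies entirely in the fixed slice $\Pi_\ell$; combined with the hypothesis $\Pi\in\para\NP$, this yields $\para\NP$-completeness.

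First I would unpack membership $\Theta\in\para\NP$: there is a nondeterministic algorithm deciding $\Theta$ in time $f(k)\cdot p(|x|)$ for some computable $f$ and polynomial $p$. A Cook--Levin style encoding of accepting computations of this machine produces, in fpt-time, a propositional formula $\phi_{x,k}$ of size $g(k)\cdot q(|x|)$ that is satisfiable if and only if $(x,k)\in\Theta$. The assumed classical $\NP$-hardness of $\Pi_\ell$ then supplies a polynomial-time many-one reduction $r$ from classical Boolean satisfiability to $\Pi_\ell$. Composing these yields the candidate reduction $(x,k)\mapsto(r(\phi_{x,k}),\ell)$, and the chain of equivalences $(x,k)\in\Theta \Leftrightarrow \phi_{x,k}\text{ is satisfiable} \Leftrightarrow r(\phi_{x,k})\in\Pi_\ell \Leftrightarrow (r(\phi_{x,k}),\ell)\in\Pi$ gives correctness.

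To finish, I would verify the two conditions of Definition~\ref{def:fpt-reduction}. For the first, the map is fpt-computable because a polynomial-time post-processor applied to an fpt-computation is still fpt. For the second, the output parameter is the constant $\ell$, which is bounded by any computable function of $k$ (for instance the constant map $k\mapsto\ell$). The only conceptual subtlety is that the classical reduction $r$ is oblivious to the parameterisation of $\Pi$; this is harmless precisely because $r$'s codomain is, by design, the single slice $\Pi_\ell$, so the parameter component can simply be hard-coded. Accordingly no significant obstacle arises, and the proposition will function as a compact criterion invoked in the paper's subsequent $\para\NP$-hardness arguments, where it suffices to locate one classically $\NP$-complete slice of the parameterised problem under consideration.
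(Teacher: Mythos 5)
Your argument is correct and is exactly the standard proof of this fact: membership is assumed, and hardness follows by composing the Cook--Levin encoding of an arbitrary $\para\NP$ problem (fpt-computable, with output size $g(k)\cdot q(|x|)$) with the polynomial-time reduction witnessing $\NP$-hardness of the slice $\Pi_\ell$, hard-coding $\ell$ as the output parameter. The paper does not reprove this result but cites Flum and Grohe, whose proof proceeds along the same lines, so there is nothing to add.
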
 
\noindent Moreover, we will use the following folklore result to get several upper bounds.
\begin{proposition}\label{parameter-bound}
	Let $Q$ be a problem such that $(Q, k)$ is $\FPT$ and let $\ell$ be a parameter with $k\leq f(\ell)$ for some computable function $f$. Then $(Q, \ell)$ is $\FPT$.
\end{proposition}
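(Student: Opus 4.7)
The plan is to directly unfold the definition of $\FPT$ and substitute the parameter bound. Since $(Q,k)$ is $\FPT$, by definition there is a deterministic algorithm $\mathcal{A}$ and a computable function $g$ together with a polynomial $p$ such that $\mathcal{A}$ decides instance $x$ with parameter $k$ in time $g(k)\cdot p(|x|)$. I will run exactly the same algorithm $\mathcal{A}$ on instances presented with parameter $\ell$, ignoring $\ell$ in the computation itself and using it only for the time analysis.

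For the analysis, I first replace $g$ by a monotone non-decreasing computable upper bound $\hat g$, defined by $\hat g(n) \dfn \max_{i \le n} g(i)$; this is still computable and satisfies $g(k)\le \hat g(k)$ for every $k$. Using the hypothesis $k\le f(\ell)$ and monotonicity of $\hat g$, I obtain
\[
g(k)\cdot p(|x|) \;\le\; \hat g(f(\ell))\cdot p(|x|) \;=\; h(\ell)\cdot p(|x|),
\]
where $h \dfn \hat g \circ f$ is computable as the composition of two computable functions. Hence $\mathcal{A}$ decides $(Q,\ell)$ within the $\FPT$ time bound, witnessing $(Q,\ell)\in\FPT$.

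There is no real obstacle here: the only minor subtlety is ensuring that $g$ can be taken to be monotone, which is handled by the standard trick above. Note that the argument does not require $\ell$ to be computable from $x$; we simply reuse the algorithm for the parameter $k$ and bound its running time in terms of $\ell$.
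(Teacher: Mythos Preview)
Your argument is correct and is precisely the standard one for this folklore fact. The paper does not actually prove Proposition~\ref{parameter-bound}; it is simply cited as a ``folklore result'' and used as a black box, so there is no alternative proof to compare against. One small remark: in the framework used here the algorithm $\mathcal{A}$ for $(Q,k)$ receives the pair $(x,k)$ as input, so to run it on an instance presented as $(x,\ell)$ you implicitly need the parameterisation $k$ to be computable from $x$ (which is the standard assumption and certainly holds for all parameters in this paper); your final sentence slightly obscures this, but it does not affect the correctness of the argument.
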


\vspace{2mm}
\noindent\textbf{Propositional Team Based Logics.}
Let $\VAR$ be a countably infinite set of variables. 
The syntax of propositional logic (\pl) is defined via the following grammar:
$
\phi\ddfn
x \mid
\lnot x\mid
\phi\lor\phi\mid
\phi\land\phi,
%\mid \depas{X}{Y},
%\mid \lnot \depa{X}{Y},
$
where $x\in\VAR$. Observe that we allow only atomic negations. As usual $\top\dfn x\lor\neg x$ and $\bot\dfn x\land \neg x$.  
Propositional dependence logic $\pdl$ is obtained by extending $\pl$ by atomic formulas of the form $\depa{x}{y}$, where $\mathbf{x,y}\subset\VAR$ are finite tuples of variables.
%$\depa{\cdot}{\cdot}$ is called the \emph{dependence atom}, and the disjunction $\lor$ is also called \emph{split-junction}.
Similarly, adding inclusion atoms $\inca{x}{y}$ where ($|\tuple x|= |\tuple y|$) and independence atoms $\indepa{x}{y}{z}$ gives rise to propositional inclusion ($\pinc$) and independence ($\pind$) logic, respectively.
%The tuple $\tuple x$ in $\depa{x}{y}$ can be empty, giving rise to formulas of the form $\depa{}{y}$, the logic with only such dependence atoms is called the constancy logic.
%Similarly, the tuple $\bold z$ in $\indepa{x}{y}{z}$ can be empty, in which case we simply write $\indepa{x}{y}{}$.
%For simpler notation, 
%When either argument of any atom consists of a single variable, we write it as such, for example, $\depas{\tuple x}{y}$. %instead of $\depas{(x)}{(y)}$.
%By dependency atoms, we mean a dependence, an independence or an inclusion atom. 
When we wish to talk about any of the three considered logics, we simply write $\mathcal L$.
That is, unless otherwise stated, $\mathcal L\in \{\pdl,\pinc,\pind\}$.
% and $\mathcal L$ contains dependency atoms of only one form.   
For an assignment $s$ and a tuple $\tuple x= (x_1,\ldots,x_n)$, $s(\tuple x)$ denotes $(s(x_1),\ldots,s(x_n))$.

%\begin{definition}[Team semantics]
\vspace{2mm}
\noindent\textbf{Team Semantics.}
Let $\phi,\psi$ be $\mathcal L$-formulas and $\mathbf{x,y,z}\subset\VAR$ be finite tuples of variables.
A \emph{team} $T$ is a set of assignments $t\colon\VAR\to\setdefinition{0,1}$.
%	Furthermore, 
The satisfaction relation $\models$ is defined as follows:
%, where $T\models \top$ is always true and $T\models \bot$ iff $T=\emptyset$: 
%and $T\models \neg \depa{P}{Q}$ iff $T=\emptyset$:
%{
	%\allowdisplaybreaks
	\begin{alignat*}{3}
		&T\models x && \text{ iff } \quad && \forall t\in T: t(x)=1, \\
		&T\models \lnot x && \text{ iff } && \forall t\in T: t(x)=0, \\
		&T\models \phi\land\psi && \text{ iff } && T\models\phi\text{ and }T\models\psi, \\
		&T\models \phi\lor\psi && \text{ iff } && \exists T_1, T_2 (T_1\cup T_2=T): T_1\models\phi\text{ and }T_2\models\psi, \\
		%		&T\models \depa{x}{y}\quad && \text{ iff } && \forall t,t'\in T: t(\mathbf x)=t'(\mathbf x)\text{ implies } t(\mathbf y)=t'(\mathbf y),\\
		&T\models \inca{x}{y}\quad && \text{ iff } && \forall t\in T\,\exists t'\in T: t(\mathbf x)=t'(\mathbf y),\\
		&T\models \indepa{x}{y}{z}\quad && \text{ iff } && \forall t,t'\in T: t(\mathbf z)=t'(\mathbf z) \text{, } \exists t'': t''(\mathbf{xzy})=t(\mathbf{xz})t'(\mathbf y).
	\end{alignat*}
	%}
%
%\end{definition}
%
%Observe that for the satisfaction of formulas of the form $\depa{}{y}$ the team has to be constant with respect to $\mathbf y$, that is why such atoms are called \emph{constancy atoms}. 
%Similarly, for $\indepa{x}{y}{}$, the pair $t, t'$ ranges over all assignments in $T$.
Intuitively, an inclusion atom $\inca{x}{y}$ is true if the value taken by $\tuple x$ under an assignment $t$ is also taken by $\tuple y$ under some assignment $t'$.
Moreover, the independence atom $\indepa{x}{y}{z}$ has the meaning that whenever the value for $\tuple z$ is fixed under two assignments $t$ and $t'$, then there is an assignment $t''$ which maps $\tuple {x}$ and $\tuple {y}$ according to $t$ and $t'$, respectively.
We can interpret the dependence atom $\depa{x}{y}$ as the independence atom $\indepa{y}{y}{x}$. The operator $\lor$ is also called a split-junction in the context of team semantics.
Note that in the literature there exist two semantics for the split-junction: \emph{lax} and \emph{strict} semantics (e.g., Hella et~al.~\cite{HellaKMV19}).
Strict semantics requires the ``splitting of the team'' to be a partition whereas lax semantics allows an ``overlapping'' of the team.
Regarding $\pdl$ and $\pind$, the complexity for $\SAT$ and $\MC$ is the same irrespective of the considered semantics. 
However, the picture is different for $\MC$ in $\pinc$ as depicted in \cite[page 627]{HellaKMV19}.
For any logic $\mathcal L$, we denote $\MC$ under strict (respectively, lax) semantics by $\MC_s (\MC_l)$. 
Moreover, $\MC_l$ is in $\Ptime$ for $\pinc$ and consequently, we have only $\MC_s$ in Table~\ref{table:para}.
%for $\SAT$ as well as 
%\todo{should have two  separate columns for SAT in PINC? or mention somewhere that results for SAT are same irrespective of the semantics used}
%Finally, note that allowing the contradictory negation increases the complexity of $\SAT$ in these logic to $\complClFont{ATIME\text-ALT(exp, poly)}$ (alternating exponential time with polynomially many alternations) as shown by Hannula~et~al.~\cite{DBLP:journals/tocl/HannulaKVV18}.
%That is one reason why we stick to atomic negation.

%Next we present some well-known properties of these logics that are relevant to our results.
%A formula $\phi$ is \emph{flat} if, given a team $T$, we have that $T\models \phi \iff \setdefinition{s}\models \phi$ for every $s\in T$.
%A logic $\mathcal L$ is \emph{downwards closed}\label{def:downwardclosed} if for every $\mathcal L$-formula $\phi$ and team $T$, if $T\models \phi$ then for every $P\subseteq T$ we have that $P\models \phi$.
%A logic $\mathcal L$ is \emph{union closed}\label{def:downwardclosed} if for every $\mathcal L$-formula $\phi$ and a collection of teams $T_i$, if $T_i\models \phi$ for each $i$, then $T\models \phi$ where $T=\bigcup_i T_i$.
%An $\mathcal L$-formula $\phi$ is \emph{2-coherent} if for every team $T$, we have that $T\models \phi \iff \setdefinition{s_i,s_j}\models \phi$ for every $s_i,s_j\in T$.
%\pl-formulas are flat and over singleton teams their team semantics coincide with the usual Tarski semantics.
%Every \pdl-formula without a split is 2-coherent. 
%Moreover, \pdl is downwards closed, whereas $\pinc$ is union closed. 

\section{Graph Representation of the Input}
In order to consider specific structural parameters, we need to agree on a representation of an input instance.
We follow the conventions given in \cite{jpdl}. 
Well-formed $\mathcal L $-formulas, for every $\mathcal L \in \{\pdl,\pinc,\pind\}$, can be seen as binary trees (the syntax tree) with leaves as atomic subformulas (variables and dependency atoms).
Similarly to $\pdl$ \cite{jpdl}, we take the syntax structure (defined below) rather than syntax tree as a graph structure in order to consider treewidth as a parameter.
%
%This is due to the reason that the law of excluded middle as well as the absorption law fails for these logics.\todo{Check!! Alternatively, one say that one distinguishes between multiple occurrences of same subformula. }
%\begin{example}
%The team $\{00,01,10,11\}$ satisfies $\depa{x}{y}\lor \depa{x}{y} $, even though it does not satisfy $\depa{x}{y}$. \todo{A similar example for Inclusion and Independence atom}
%\end{example}
We use the same graph representation for each logic $\mathcal L$. 
That is, when an atom $\depa{x}{y}$ is replaced by either $\inca{x}{y}$ or $\indepa{x}{y}{\emptyset}$, the graph representation, and hence, the treewidth of this graph remains the same.
Also, in the case of $\MC$, we include assignments in the graph representation.
In the latter case, we consider the Gaifman graph of the structure that models both, the team and the input formula.

\vspace{2mm}
%\begin{definition}[Syntax Structure \cite{jpdl}]\label{def:syntax}
\noindent\textbf{Syntax Structure.}
Let 
%$\langle T, \phi \rangle$ be an instance of $\MC$, where
$\phi$ be an $\mathcal L$-formula with propositions $\{x_1, \ldots, x_n\}$ and $T=\setdefinition{s_1, \ldots s_m}$ a team.
%of assignments $s_i\colon\VAR\to\setdefinition{0,1}$.
The \emph{syntax structure} $\strucA_{T,\phi}$ has the vocabulary, 
$
\tau_{\problemind} \dfn \{\, \var{}^1, \SF{}^1, \succcurlyeq^2, \DEP^2, \istrue{}^2,\isfalse{}^2, r, c_1, \ldots,c_m\,\},
$
where the superscript denote the arity of each relation.
The universe of $\strucA_\problemind$ is $\univA\dfn\SubForm{\phi}\cup \VAR(\phi) \cup \setdefinition{c_1^\strucA,\ldots, c_m^\strucA}$, where $\SubForm{\phi}$ and $\VAR(\phi)$ denote the set of subformulas and variables appearing in $\phi$, respectively. 
\begin{itemize}
	\item $\SF{}$ and $\var{}$ are unary relations: `is a subformula of $\phi$' and `is a variable in $\phi$'.\looseness=-1
	\item $\succcurlyeq$ is a binary relation such that $\psi \succcurlyeq^\strucA \alpha$ iff $\alpha$ is an immediate subformula of $\psi$. That is, either $\psi = \neg\alpha$ or there is a $\beta\in \SubForm{\phi}$ such that $\psi = \alpha \oplus \beta$ where $\oplus \in \{\land, \lor\}$. Moreover, $r$ is a constant symbol representing $\phi$.
	\item $\DEP$ is a binary relation connecting $\mathcal L$-atoms and its parameters. For example, if $\alpha=\inca{x}{y}$ and $x,y\in \tuple x \cup \tuple y$, then $\DEP(\alpha,x)$ and $\DEP( x, y)$ are true.
	%	if $\alpha\in \{\depas{x}{y},\incas{x}{y},\indepas{x}{y}{}\}$ then $\DEP(\alpha,x), \DEP(\alpha,y)$ and $\DEP(x,y)$ are true in $\calA_\problemind$.
	\item The set $\setdefinition{c_1,\ldots,c_m}$ encodes the team $T$. Each $c_i \in \tau_\problemind$ corresponds to an assignment $s_i \in T$ for $i \leq m$, interpreted as $ c_i^{\strucA} \in {A}$. 
	%	\item $\inteam{c}$ is true if and only if $c\in \setdefinition{c_1,\ldots,c_m}$.
	\item $\istrue{}$ and $\isfalse{}$ relate $\var{}$ with $c_1,\ldots, c_m$. $\istrue{c, x}$ (resp., $\isfalse{c, x}$) is true iff $x$ is mapped to $1$ (resp., $0$) by the assignment in $T$ interpreted by $c$. 
\end{itemize}
The \emph{syntax structure} $\strucA_{\phi}$ over a vocabulary $\tau_\phi$ is defined analogously.
Here $\tau_\phi$ neither contains the team related relations nor the constants $c_i^\strucA$ for $1\leq i\leq m$.
%Notice that the binary relation $\DEP$ plays the same role for each $\mathcal L\in\{\pdl,\pinc,\pind\}$, i.e., it connects an $\mathcal L$-atom with the used variables.
%As a result, the graph representation for each considered logic remains the same.
%\end{definition}

%In the following we write $\strucA$ instead of $\strucA_{T,\phi}$ when it is clear that our input instance is $\langle{T,\phi}\rangle$.

%\begin{definition}[Gaifman graph]
\vspace{2mm}
\noindent\textbf{Gaifman graph.} Let $T$ be a team, $\phi$ an $\mathcal L$-formula, $\strucA_\problemind$ and $A$ as above. The \emph{Gaifman graph} $G_\problemind=(\univA,E)$ of the $\tau_\problemind$-structure $\strucA_\problemind$ is defined as 
$
E\dfn\big\{\,\{u,v\}\;\big|\; u,v\in\univA, \text{ such that there is an } R\in\tau_\problemind \text{ with }(u,v)\in R\,\big\}.
$
Analogously, we let $G_\phi$ to be the \emph{Gaifman graph} for the $\tau_\phi$-structure $\strucA_\phi$.
%\end{definition}

Note that for $G_\phi$ we have $E = \DEP\; \cup \succcurlyeq$ and for $G_\problemind$ we have that $E=\DEP\,\cup \succcurlyeq \cup\, \mathsf{isTrue}\cup\mathsf{isFalse}$.
%\begin{lemma}[$\star$]\label{lem:1to1}
%	There is a $1\text-1$-correspondence between the collection $\setdefinition{P_i \mid P_i\subseteq T}$ of sub-teams and $\setdefinition{C_i \mid C_i\subseteq C}$ of subsets of indices.
%\end{lemma}
%
%We will often write $C_P$ for a subset of $C$ that corresponds to the sub-team $P$. 
%Moreover, when saying `a sub-team' in the context of the structure $\calA$ we mean some subset $C_P \subseteq C^\calA$.

%\begin{definition}[Treewidth]\label{def:treewidth}

\vspace{2mm}
\noindent\textbf{Treewidth.}
A \emph{tree decomposition} of a graph $G=(V,E)$ is a tree $T=(B,E_T)$, where the vertex set $B\subseteq\mathcal P(V)$ is a collection of \emph{bags} and $E_T$ is the edge relation such that (1) $\bigcup_{b\in B}b=V$, (2) for every $\setdefinition{u,v}\in E$ there is a bag $b\in B$ with $u,v\in b$, and (3) for all $v\in V$ the restriction of $T$ to $v$ (the subset with all bags containing $v$) is connected.
%\begin{itemize}
%	\item $\bigcup_{b\in B}b=V$,
%	\item for every $\setdefinition{u,v}\in E$ there is a bag $b\in B$ with $u,v\in b$, and 
%	\item for all $v\in V$ the restriction of $T$ to $v$ (the subset with all bags containing $v$) is connected.
%\end{itemize}
The \emph{width} of a tree decomposition $T=(B,E_T)$ is the size of the largest bag minus one: $\max_{b\in B}|b|-1$.
The \emph{treewidth} of a graph $G$ is the minimum over widths of all tree decompositions of $G$.
%\end{definition}
%
The treewidth of a tree is one.
Intuitively, it measures the tree-likeness of a given graph.
%
%The decision problem to determine whether the treewidth of a given graph $\mathcal G =(V,E)$ is at most $k$, is $\NP$-complete \cite{doi:10.1137/0608024}.
%See Bodlaender's Guide \cite{DBLP:journals/actaC/Bodlaender93} for an overview of algorithms that compute tree decompositions.
%
%Further note that, opposed to our parameterized setting here, where instances of PPs are pairs of strings and numbers (the latter being the value of the parameter), there exists an approach, in which one works with parameterisation functions that map the input string to the value of the parameter (e.g., Flum and Grohe~\cite{DBLP:series/txtcs/FlumG06}).
%Clearly, in such a setting it could be tempting to think that treewidth is not a reasonable parameter because of the mentioned hardness of computing it.
%However, one can circumvent this pitfall by letting the input string contain the treewidth.
%The parameterisation function then essentially maps to a substring of the input avoiding the need to compute it from scratch.

\begin{example}[Adapted from~\cite{jpdl}]\label{ex:gaifman-examples}
	Figure~\ref{fig:example-tw} represents the Gaifman graph of the syntax structure $\mathcal A_\phi$ (in middle) with a tree decomposition (on the right).
	Since the largest bag is of size $3$, the treewidth of the given decomposition is $2$.
	Figure~\ref{fig:example-twe} presents the Gaifman graph of $\mathcal A_\problemind$, that is, when the team $T=\{s_1, s_2\}=\{0011,1110\}$ is part of the input  
	(an assignment $s$ is denoted as $s(x_1\ldots x_4)$).
	%	The graph representation is invariant under the logics we consider, that is, replacing the atom $\incas{x_3}{x_4}$ by $\indepas{x_3}{x_4}{\emptyset}$ yields the same graph represenntation as in Figure~\ref{fig:example-tw}.
	%
	\begin{figure}[t]
		\centering
		\begin{tikzpicture}[scale=0.6, 
			level 1/.style={sibling distance=7.5em, level distance= 3 em},
			level 2/.style={sibling distance=4em, level distance= 4 em}, 
			level 3/.style={sibling distance=3.5em, level distance= 4 em}, 
			every node/.style={scale=0.7},
			edge from parent/.style={thin,-,black, draw},
			leaf/.style = {draw, circle}]
			\node  {\textcolor{black}{$\wedge_r$}}
			child {node {$\lor_2$}
				child {node {$x_3$}}
				child {node {$\neg$}
					child {node {$x_1$}}}}	
			child {node {$\lor_1$}	
				child{node {$\incas{x_3}{x_4}$ }}
				child{node {$\land_1$}
					child {node {$x_1 $}}
					child {node {$x_2$}}}
			};
		\end{tikzpicture} 
		\quad
		\begin{tikzpicture}[gate/.style={inner sep=1mm,draw,rounded corners,rectangle},scale=.55, every node/.style={scale=0.7}]
			\node (x1) at (0,0) {$x_1$};
			\node (x2) at (1,0) {$x_2$};
			\node (x3) at (1.8,0) {$x_3$};
			\node (x4) at (3.2,0) {$x_4$};
			
			\node[gate] (and1) at (.5,-1) {$\land_1$};
			\node[gate] (dep) at (2.5,-1) {$\mathsf{Inc}$};
			
			\node[gate] (not) at (0,-2) {$\lnot$};
			\node[gate] (or1) at (2.5,-2) {$\lor_1$};
			
			\node[gate] (or2) at (.5,-3) {$\lor_2$};
			
			\node[gate] (andr) at (1.5,-4) {$\land_r$};
			
			\foreach \f/\t in {x1/and1,x2/and1,x3/dep,x4/dep,and1/or1,dep/or1,x1/not,not/or2,x3/or2,or2/andr,or1/andr, x3/x4}{
				\draw[-] (\f) -- (\t);
			}
		\end{tikzpicture}
		\quad
		\begin{tikzpicture}[scale=0.6, level distance= 2.5 em, sibling distance= 6 em,
			every node/.style={draw, scale=0.6},
			edge from parent/.style={thin,-,black, draw},
			leaf/.style = {draw, circle}]
			\node  {\textcolor{black}{$\wedge_r,\lor_1,\lor_2$}}
			child {node {$\lor_1,\lor_2,\mathsf{Inc}$} 
				child {node {$\lor_2,\mathsf{Inc},x_3$}
					child {node {$\mathsf{Inc},x_3,x_4$}
						%			child {node {$\mathsf{dep},x_3,x_4$}}
				}}		
				child {node {$\lor_1,\lor_2,\land_1$}	
					child{node {$\lor_2,\land_1,x_2$}
						child{node {$\lor_2,\land_1,x_1 $}
							child{node {$\lor_2,x_1,\neg $}}}}}};
		\end{tikzpicture}
		
		\caption{An example syntax tree (left) with the corresponding Gaifman graph (middle) and a tree decomposition (right) for $(x_3\lor \neg x_1) \land \big(\incas{x_3}{x_4} \lor (x_1\land x_2)\big)$. We abbreviated subformulas in the inner vertices of the Gaifman graph for better presentation.}\label{fig:example-tw} 
	\end{figure}
\end{example}

\begin{figure}[t]
	\centering
	\begin{tikzpicture}[gate/.style={inner sep=1mm,draw,rounded corners,rectangle},scale=.5, every node/.style={draw,scale=0.65}]
		\node (x1) at (-0.5,0) {$x_1$};
		\node (x2) at (0.75,0) {$x_2$};
		\node (x3) at (2.1,0) {$x_3$};
		\node (x4) at (3.5,0) {$x_4$};
		
		\node (s1) at (-0.5,2) {$c_1$};
		\node (s2) at (3.5,2) {$c_2$};
		
		\node[gate] (and1) at (.5,-1) {$\land_1$};
		\node[gate] (dep) at (3,-1) {$\mathsf{Inc}$};
		
		\node[gate] (not) at (0,-2) {$\lnot$};
		\node[gate] (or1) at (2.7,-2) {$\lor_1$};
		
		\node[gate] (or2) at (.5,-3) {$\lor_2$};
		
		\node[gate] (andr) at (1.5,-4) {$\land_r$};
		
		\foreach \f/\t in {x1/and1,x2/and1,x3/dep,x4/dep,and1/or1,dep/or1,x1/not,not/or2,x3/or2,or2/andr,or1/andr}{
			\draw[-] (\f) -- (\t);
		}
		\foreach \f/\t in {x1/s1,x1/s2,x2/s1,x2/s2,x3/s1,x3/s2,x4/s1,x4/s2, x3/x4}{
			\draw[-] (\f) -- (\t);
		}
	\end{tikzpicture}
	\qquad
	\begin{tikzpicture}[scale=0.65, level distance= 2.5 em, sibling distance= 9 em,
		every node/.style={draw, scale=0.65},
		edge from parent/.style={thin,-,black, draw},
		leaf/.style = {draw, circle}]
		\node  {\textcolor{black}{$\wedge_r,\lor_1,\lor_2$}}
		child {node {$\lor_1,\lor_2,\mathsf{Inc}, c_1,c_2$} 
			child {node {$\lor_2,\mathsf{Inc},x_3, c_1,c_2$}
				child {node {$\mathsf{Inc},x_3, x_4,c_1,c_2$}
			}}		
			child {node {$\lor_1,\lor_2,\land_1, c_1,c_2$}	
				child{node {$\lor_2,\land_1,x_2, c_1,c_2$}
					child{node {$\lor_2,\land_1,x_1, c_1,c_2$}
						child{node {$\lor_2,x_1,\neg, c_1,c_2$}}}}}};
	\end{tikzpicture}
	
	\caption{The Gaifman graph for $\langle T,\Phi \rangle$ (Example~\ref{ex:gaifman-examples}) with a possible tree decomposition.}\label{fig:example-twe}
\end{figure}

%\paragraph*{Parameterisations Considered}
\vspace{2mm}
\noindent\textbf{Parameterizations.}
We consider eight different parameters for $\MC$ and six for $\SAT$. 
For $\MC$, we include $\formulatw$, $\formulateamtw$, $\teamsize$, $\formula$, $\variables$, $\formuladepth$, $\splits$ and $\arity$.
However, for $\SAT$, $\formulateamtw$ and $\teamsize$ are not meaningful.
All these parameters arise naturally in problems we study. 
Let $T$ be a team and $\phi$ an $\mathcal L$-formula.
$\splits$ denotes the number of times a split-junction $(\lor)$ appears in $\phi$ and $\variables$ denotes the number of distinct propositional variables.
$\formuladepth$ is the depth of the syntax tree of $\phi$, that is, the length of the longest path from root to any leaf in the syntax tree. 
$\teamsize$ is the cardinality of the team $T$, and $\formula$ is $|\phi|$.
For a dependence atom $\depa{x}{y}$ and inclusion atom $\inca{x}{y}$, the arity is defined as $|\tuple x|$ (recall that $|\tuple x|= |\tuple y|$ for an inclusion atom), whereas, for an independence atom $\indepa{x}{y}{z}$, it is
the number of distinct variables appearing in $\indepa{x}{y}{z}$.  Finally, $\arity$ denotes the maximum arity of any $\mathcal L$-atom in $\phi$.
%\begin{itemize}
%	\item $\splits$ denotes the number of times a split-junction $(\lor)$ appears in $\phi$ and $\variables$ denotes the number of distinct propositional variables.
%	\item $\formuladepth$ is the depth of the syntax tree of $\phi$, that is, the length of the longest path from root to any leaf in the syntax tree. 
%	\item $\teamsize$ is the cardinality of the team $T$, and $\formula$ is $|\phi|$.
%	\item For a dependence atom $\depa{x}{y}$ and inclusion atom $\inca{x}{y}$, the arity is defined as $|\tuple x|$ (recall that $|\tuple x|= |\tuple y|$ for an inclusion atom), whereas, for an independence atom $\indepa{x}{y}{z}$, it is 
%$|\mathbf{x}\cup \mathbf{y}\cup \mathbf{z}|$ 
%	the number of distinct variables appearing in $\indepa{x}{y}{z}$.  Finally, $\arity$ denotes the maximum arity of any $\mathcal L$-atom in $\phi$.
%\end{itemize}
%
Regarding treewidth, recall that for $\MC$, we also include the assignment-variable relation in the graph representation.
This yields two graphs: $G_\phi$ for $\phi$, and $G_{T,\phi}$ for $\langle T,\phi\rangle$.
Consequently, there are two treewidth notions. $\formulatw$ is the treewidth of $G_\phi$ and $\formulateamtw$ is the treewidth of $G_{T,\phi}$. 
%\begin{itemize}
%	\item $\formulatw$ is the treewidth of $G_\phi$.
%	\item $\formulateamtw$ is the treewidth of $G_{T,\phi}$.
%\end{itemize}
The name emphasises whether the team is also part of the graph.
As we pointed out $\formulateamtw$ and $\teamsize$ are both only relevant for $\MC$ because an instance of $\SAT$ does not contain a team.

Given an instance $\langle\problemind\rangle$ and a parameterisation $\kappa$, then $\kappa(\problemind)$ denotes the parameter value of $\langle\problemind\rangle$.
The following relationship between several of the aforementioned parameters was proven for $\pdl$.
It is easy to observe that the lemma also applies to $\pinc$ and $\pind$.
%\todo[inline]{formula-teamsize-tw is a better name}
\begin{lemma}[\cite{jpdl}]\label{para:relations}
	Let $\mathcal L \in \{\pdl,\pinc,\pind\}$, $\phi$ an $\mathcal L$-formula and $T$ be a team. Then, $\teamsize(\problemind) \leq 2^{\variables(\problemind)}$, $\teamsize(\problemind ) \leq 2^{\formula(\problemind)}$, and $\formula(\problemind) \leq 2^{2\cdot \formuladepth(\problemind)}$.
	%\begin{enumerate}
	%	\item $\teamsize(\problemind) \leq 2^{\variables(\problemind)},$
	%	\item $\teamsize(\problemind ) \leq 2^{\formula(\problemind)},$
	%	\item $\formula(\problemind) \leq 2^{2\cdot \formuladepth(\problemind)}.$
	%\end{enumerate}
\end{lemma}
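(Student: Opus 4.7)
The plan is to prove the three inequalities separately by elementary counting, each essentially reducing to the corresponding bound for $\pdl$ in \cite{jpdl}.

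For inequality (1), I would invoke locality with respect to $\VAR(\phi)$: for all three logics, whether $T\models\phi$ holds depends only on the restrictions of assignments in $T$ to the variables actually occurring in $\phi$. We may therefore identify any two assignments in $T$ that agree on $\VAR(\phi)$, and so without loss of generality every $t\in T$ has domain $\VAR(\phi)$. Since $T$ is a set, it cannot contain more than the $2^{|\VAR(\phi)|}=2^{\variables(\problemind)}$ Boolean functions on $\VAR(\phi)$, giving the desired bound.

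Inequality (2) then follows as an immediate corollary from the trivial bound $\variables(\problemind)\leq\formula(\problemind)$ (every variable of $\phi$ must contribute at least one symbol to its encoding), by chaining with (1) to obtain $\teamsize(\problemind)\leq 2^{\variables(\problemind)}\leq 2^{\formula(\problemind)}$.

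For inequality (3), I would perform the standard depth-to-size count on the syntax tree of $\phi$, which is a binary tree of depth $d:=\formuladepth(\problemind)$ with atoms as leaves: the tree has at most $2^{d+1}-1$ nodes, hence at most $2^d$ atomic leaves; summing a constant contribution per internal connective with the length of the variable tuple(s) attached at each atomic leaf yields $\formula(\problemind)\leq 2^{2d}$ by exactly the argument of \cite{jpdl}. The key observation that makes this transfer to $\pinc$ and $\pind$ is that the syntax structure in Definition~\ref{def:syntax} treats $\pdl$-, $\pinc$-, and $\pind$-atoms uniformly as single leaves carrying a $\DEP$-attached variable tuple, so the counting is invariant across the three logics. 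I expect the only mildly subtle step to be controlling these per-leaf variable-tuple contributions in (3); once that is handled as in \cite{jpdl}, inequalities (1) and (2) follow by direct counting and do not rely on any atom-specific property beyond locality in $\VAR(\phi)$.
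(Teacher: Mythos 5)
Your overall strategy---elementary counting that is invariant across the three logics because atoms of $\pdl$, $\pinc$ and $\pind$ occupy identical positions in the syntax tree/structure---is exactly how the paper handles this lemma (it is imported from \cite{jpdl} with the observation that nothing in the counting depends on the kind of atom), and items (2) and (3) are fine as you argue them, provided $|\phi|$ is read at the level of the syntax tree with each dependency atom as a single leaf; otherwise a single atom of huge arity at depth $1$ would already falsify (3), so that convention is doing real work and is worth stating rather than deferring.

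The one step I would push back on is your justification of (1) via locality. First, this paper explicitly recalls that strict semantics does \emph{not} satisfy locality, and $\MC_s$ for $\pinc$ is one of the cases the lemma must cover, so ``invoke locality with respect to $\VAR(\phi)$'' is not available in general. Second, even where locality holds, ``identify any two assignments that agree on $\VAR(\phi)$'' replaces $T$ by a smaller team, i.e.\ it changes the very parameter $\teamsize(\problemind)$ you are trying to bound; a WLOG that shrinks the quantity on the left-hand side does not prove the inequality for the instance as given. The correct (and simpler) justification is purely representational: in an $\MC$ instance $\langle T,\phi\rangle$ the team is encoded as a \emph{set} of assignments over the variables of the instance (cf.\ the syntax structure of Definition~\ref{def:syntax}, where $\istrue{}$/$\isfalse{}$ relate team constants only to $\VAR(\phi)$), so there are at most $2^{\variables(\problemind)}$ distinct rows and $\teamsize(\problemind)\leq 2^{\variables(\problemind)}$ with no semantic input at all. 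With that repair, your chain for (2) and your tree count for (3) go through as in \cite{jpdl}.
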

Moreover, recall that we use the same graph representation for $\pdl, \pinc$ and $\pind$.
As a consequence, the following result also applies.
%That is, bounding the treewidth of the structure also bounds the $\teamsize$ when the team $T$ is part of the graph representation. 
\begin{corollary}[\cite{jpdl}]\label{cor:tw-teamsize}
	Let $\mathcal L \in \{\pdl,\pinc,\pind\}$, $\phi$ an $\mathcal L$-formula and $T$ be a team. 
	Then $\formulateamtw(\problemind)$ bounds $\teamsize(\problemind)$. 
\end{corollary}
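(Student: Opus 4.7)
The plan is to exploit the fact that the Gaifman graph $G_{T,\phi}$ contains a huge complete bipartite subgraph whose parts are the team-element constants and the propositional variables of $\phi$, and then combine this with a standard fact about the treewidth of complete bipartite graphs.

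Concretely, set $n := |\VAR(\phi)|$ and $m := \teamsize(\problemind) = |T|$. By Definition~\ref{def:syntax}, for every assignment-constant $c_i$ and every variable $x_j \in \VAR(\phi)$ exactly one of $\istrue(c_i,x_j)$, $\isfalse(c_i,x_j)$ holds, so $\{c_i, x_j\}$ is an edge of $G_{T,\phi}$. Consequently the subgraph of $G_{T,\phi}$ induced on $\{c_1,\dots,c_m\} \cup \VAR(\phi)$ contains a copy of $K_{m,n}$.

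Now invoke the standard graph-theoretic fact that the treewidth of $K_{m,n}$ equals $\min(m,n)$, together with the monotonicity of treewidth under subgraphs. Writing $k := \formulateamtw(\problemind)$, this gives $\min(m,n) \leq k$. A short case distinction closes the argument: if $m \leq k$ then $\teamsize \leq k$ directly, and otherwise $n \leq k$, whence by Lemma~\ref{para:relations}(1) we get $\teamsize = m \leq 2^{\variables(\problemind)} = 2^n \leq 2^k$. In either case $\teamsize$ is bounded by the computable function $k \mapsto 2^k$ in $\formulateamtw$, as required.

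The only non-routine ingredient is the value $\mathrm{tw}(K_{m,n}) = \min(m,n)$; everything else is reading off the edge relation from Definition~\ref{def:syntax} and reusing Lemma~\ref{para:relations}(1). The result is insensitive to which dependency atom appears in $\phi$, since (as emphasised in the paper) the graph representation is invariant under replacing a $\depas{\tuple x}{\tuple y}$-atom by an inclusion or independence atom; this is precisely why the corollary carries over from $\pdl$ to $\pinc$ and $\pind$ without any change to the argument.
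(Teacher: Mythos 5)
Your argument is correct and is essentially the standard one underlying the cited result: the $\istrue$/$\isfalse$ relations force a $K_{m,n}$ between the team constants and $\VAR(\phi)$, so $\min(\teamsize,\variables)\leq\formulateamtw$, and Lemma~\ref{para:relations}(1) upgrades this to $\teamsize\leq 2^{\formulateamtw}$, which is exactly the computable bound needed for Proposition~\ref{parameter-bound}. This matches the route the paper relies on (it imports the corollary from \cite{jpdl} after observing that the graph representation is unchanged for $\pinc$ and $\pind$), so there is nothing to add.
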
 
\section{Complexity of inclusion and independence logic}
%Here we present the results of our paper. Most of the technical details are in the appendix.
We start with general complexity results that hold for any team based logic whose atoms are $\Ptime$-checkable.
An atom $\alpha$ is $\Ptime$-checkable if given a team $T$, $T\models \alpha$ can be checked in polynomial time.
It is immediate that each atom considered in this paper is $\Ptime$-checkable. 
\begin{theorem}\label{MC:general}
	Let $\mathcal L$ be a team based logic such that $\mathcal L$-atoms are $\Ptime$-checkable, then $\MC$ for $\mathcal L$ when parameterized by $\teamsize$ is $\FPT$. 
\end{theorem}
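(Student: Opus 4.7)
The plan is to give a bottom-up dynamic-programming algorithm indexed by subteams of $T$ and subformulas of $\phi$. Setting $k\dfn|T|$, I maintain a Boolean table of size $2^{k}\cdot|\SubForm{\phi}|$ whose entry at $(T',\psi)$ records whether $T'\models\psi$. Entries are filled in order of increasing complexity of $\psi$, so that every table look-up made while computing an entry concerns an already-filled entry. The final answer is read off from the cell $(T,\phi)$.

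For verum, falsum, variables, negated variables, and $\mathcal L$-atoms, the $\Ptime$-checkability hypothesis provides each entry in time polynomial in $|T|+|\phi|$. For $\psi=\psi_1\land\psi_2$, the entry at $(T',\psi)$ is the conjunction of the previously computed entries at $(T',\psi_1)$ and $(T',\psi_2)$. The only nontrivial case is the split-junction $\psi=\psi_1\lor\psi_2$: for the fixed $T'$, I enumerate every decomposition $T'=T_1\cup T_2$ (at most $3^{|T'|}\le 3^{k}$ under lax semantics, and $2^{|T'|}\le 2^{k}$ under strict semantics) and mark $(T',\psi)$ \emph{true} as soon as some decomposition has both $(T_1,\psi_1)$ and $(T_2,\psi_2)$ already marked \emph{true}.

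Aggregating the cost over all $2^{k}\cdot|\SubForm{\phi}|$ entries, the total running time is bounded by $|\phi|\cdot 2^{k}\cdot 3^{k}\cdot q(|T|+|\phi|)$ for some polynomial $q$, i.e.\ of the form $f(k)\cdot p(|\langle T,\phi\rangle|)$ with $f(k)=6^{k}$. This establishes $\FPT$-membership of $\MC$ under the parameterisation $\teamsize$, uniformly for both lax and strict semantics.

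I do not foresee any genuine obstacle; the only step requiring care is the split-junction, where for each of the $2^{k}$ subteams one must inspect up to $3^{k}$ decompositions. Both exponentials depend solely on $k$, so they are absorbed into the $f(k)$ factor of the $\FPT$ bound, leaving the dependence on the formula and on the representation of the team polynomial. Since the argument only uses $\Ptime$-checkability of atoms together with the standard clauses of team semantics for $\land$ and $\lor$, it applies verbatim to any such logic $\mathcal L$, and in particular to $\pdl$, $\pinc$, and $\pind$.
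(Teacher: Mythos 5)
Your proposal is correct and follows essentially the same route as the paper: both arguments brute-force over the $2^{\teamsize}$ subteams of $T$, evaluate atoms on each subteam using $\Ptime$-checkability, and propagate satisfaction up the syntax tree, with the split-junction handled by enumerating decompositions (your $3^{k}$/$2^{k}$ count for lax/strict is the right bound). Your explicit dynamic-programming table is just a cleaner, self-contained presentation of the bottom-up algorithm the paper imports from the earlier $\pdl$ work, and the resulting $f(k)\cdot p(|\langle T,\phi\rangle|)$ bound matches.
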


\begin{proof}
	We claim that the bottom up (brute force) algorithm for the model checking of $\pdl$ \cite[Thm.~17]{jpdl} works for any team based logic $\mathcal L$ such that $\mathcal L$-atoms are $\Ptime$-checkable.
	The algorithm begins by checking the satisfaction of atoms against each subteam.
	This can be achieved in $\FPT$-time since teamsize and consequently the number of subteams is bounded.
	Moreover, by taking the union of subteams for split-junction and keeping the same team for conjunction the algorithm can find subteams for each subformula in $\FPT$-time.
	Lastly, it checks that the team $T$ is indeed a satisfying team for the formula $\phi$.
	%	For $\pdl$, it was proved using the fact that dependence atoms are  $2$-coherent. 
	For any team based logic $\mathcal L$, the $\FPT$ runtime is guaranteed if $\mathcal L$-atoms are $\Ptime$-checkable.
	Finally, the proof works for both strict and lax semantics.
\end{proof}

The following corollary to Theorem \ref{MC:general} is derived using Lemma~\ref{para:relations} and Proposition~\ref{parameter-bound}.

\begin{corollary}\label{MC:cor}
	Let $\mathcal L$ be a team based logic such that $\mathcal L$-atoms are $\Ptime$-checkable, then $\MC$ for $\mathcal L$ when parameterized by $k$ is $\FPT$, if $k\in\{\formulateamtw,\linebreak \formuladepth, \variables,\formula\}$. 
\end{corollary}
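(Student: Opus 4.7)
The plan is to derive each of the four cases directly from Theorem~\ref{MC:general} (MC parameterised by \teamsize{} is \FPT) by invoking Proposition~\ref{parameter-bound}: it suffices to show that, for each parameter $k$ in the list, \teamsize{} is bounded by a computable function of $k$. Three of the four bounds are already packaged in Lemma~\ref{para:relations}, and the remaining one (\formulateamtw) is Corollary~\ref{cor:tw-teamsize}. So the body of the proof is essentially four one-line applications of Proposition~\ref{parameter-bound}.

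First I would recall the statement of Theorem~\ref{MC:general} as the base case, observing that its proof is agnostic to which team-based logic $\mathcal L$ is considered so long as $\mathcal L$-atoms are \Ptime-checkable. Then I would handle the cases in increasing strength of the exponential bound. For $k=\variables$, Lemma~\ref{para:relations}(1) gives $\teamsize(\problemind) \leq 2^{\variables(\problemind)}$, hence by Proposition~\ref{parameter-bound} with $f(\ell)=2^\ell$ the problem is \FPT{} when parameterised by \variables. For $k=\formula$, Lemma~\ref{para:relations}(2) yields $\teamsize(\problemind) \leq 2^{\formula(\problemind)}$, giving the same conclusion. For $k=\formuladepth$, Lemma~\ref{para:relations}(3) together with (2) gives the double-exponential bound $\teamsize(\problemind) \leq 2^{2^{2\cdot\formuladepth(\problemind)}}$, which is still a computable function of \formuladepth, so Proposition~\ref{parameter-bound} again applies.

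The final case, $k=\formulateamtw$, is handled by Corollary~\ref{cor:tw-teamsize}, which explicitly states that \formulateamtw{} bounds \teamsize; one more appeal to Proposition~\ref{parameter-bound} closes this case. I do not expect any genuine obstacle here, as the corollary is purely a bookkeeping consequence of the base theorem together with previously established parameter inequalities; the only care needed is to verify that each of the bounding functions is computable (which it plainly is, each being either $2^\ell$, $2^{2^{2\ell}}$, or the function witnessed in the proof of Corollary~\ref{cor:tw-teamsize}). No new algorithmic idea is required beyond the brute-force algorithm already invoked in Theorem~\ref{MC:general}, and the argument carries over uniformly to both lax and strict semantics for the same reason stated there.
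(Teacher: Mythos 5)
Your proposal is correct and follows exactly the paper's route: the paper derives this corollary in one line from Theorem~\ref{MC:general} together with Lemma~\ref{para:relations} and Proposition~\ref{parameter-bound} (with Corollary~\ref{cor:tw-teamsize} covering the $\formulateamtw$ case). Your explicit chaining of Lemma~\ref{para:relations}(2) and (3) for the $\formuladepth$ case is precisely the intended bookkeeping.
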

The following theorem states results for satisfiability.
\begin{theorem}\label{SAT:general}
	Let $\mathcal L$ be a team based logic s.t. $\mathcal L$-atoms are $\Ptime$-checkable, then $\SAT$ for $\mathcal L$ when parameterized by $k$ is $\FPT$, if $k\in \{\formuladepth, \variables\}$.
\end{theorem}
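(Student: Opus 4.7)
The plan is to first establish $\SAT$ parameterised by $\variables$ as $\FPT$ via a brute-force enumeration over candidate satisfying teams, and then to lift the result to $\formula$ and $\formuladepth$ by invoking Lemma~\ref{para:relations} together with Proposition~\ref{parameter-bound}.

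For the main step, I would argue that, without loss of generality, a satisfiable $\phi$ admits a nonempty satisfying team $T$ whose assignments have domain contained in $\VAR(\phi)$. Consequently there are at most $2^{\variables(\phi)}$ relevant assignments and hence at most $2^{2^{\variables(\phi)}}$ candidate teams over $\VAR(\phi)$, each of size at most $2^{\variables(\phi)}$. The algorithm enumerates all such nonempty teams $T$ and, for each, invokes the $\FPT$-algorithm of Theorem~\ref{MC:general} to decide whether $T \models \phi$. Since $\teamsize(T) \le 2^{\variables(\phi)}$, each $\MC$ call runs in time $g(\variables(\phi)) \cdot p(|\phi|)$ for some computable function $g$ and polynomial $p$. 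The overall runtime is $2^{2^{\variables(\phi)}} \cdot g(\variables(\phi)) \cdot p(|\phi|)$, which has the required $\FPT$-shape. Returning ``yes'' as soon as any enumerated team satisfies $\phi$, and ``no'' otherwise, correctly solves $\SAT$. The argument is insensitive to the choice between lax and strict semantics, since Theorem~\ref{MC:general} applies to both.

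The remaining two cases are then immediate. Trivially $\variables(\phi) \le |\phi| = \formula(\phi)$, since every variable occurrence contributes at least one symbol. Moreover, Lemma~\ref{para:relations}(3) gives $\formula(\phi) \le 2^{2\cdot\formuladepth(\phi)}$, so $\variables(\phi)$ is bounded by a computable function of $\formuladepth(\phi)$ as well. Therefore, by Proposition~\ref{parameter-bound}, $\FPT$-membership transfers from $\variables$ to both $\formula$ and $\formuladepth$.

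The delicate point is the initial normalisation step that restricts attention to teams over $\VAR(\phi)$. For lax semantics this is exactly the standard locality principle, which is known to hold for each $\mathcal L \in \{\pdl,\pinc,\pind\}$. For strict semantics locality may fail in general, but for the purpose of $\SAT$ it suffices to observe that any satisfying team may be projected to $\VAR(\phi)$ without breaking the satisfaction of atoms, since atomic satisfaction depends only on the variables actually occurring in the atom; the only subtle interaction is with split-junctions, where one must check that a suitable splitting still exists on the projected team. This is the step where a mild additional well-behavedness assumption on the generic $\mathcal L$ is implicitly needed and is satisfied by the three logics under consideration.
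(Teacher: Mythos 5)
Your proposal is correct and follows essentially the same route as the paper: the $\variables$ case is handled by enumerating all $2^{2^{\variables}}$ candidate teams and deciding each with the $\teamsize$-parameterised model-checking algorithm of Theorem~\ref{MC:general}, and the other two cases follow by parameter bounds via Proposition~\ref{parameter-bound} (the paper dispatches $\formula$ directly by noting that any decidable problem parameterised by input size is $\FPT$, but that difference is immaterial). Your explicit attention to restricting teams to $\VAR(\phi)$ and to the strict-semantics locality subtlety is a point the paper glosses over, and is a welcome addition rather than a deviation.
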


\begin{proof}
	Notice first that the case for $\formula$ is trivial because any problem parameterized by input size is $\FPT$.
	Moving on, bounding $\formuladepth$ also bounds $\formula$, this yields $\FPT$-membership for $\formuladepth$ in conjunction with Prop.~\ref{parameter-bound}.
	Finally, for $\variables$, one can enumerate all of the $2^{2^{\variables}}$-many teams in $\FPT$-time and determine whether any of these satisfies the input formula.
	The last step requires that the model checking parameterized by $\teamsize$ is $\FPT$, which is true due to Theorem~\ref{MC:general}.
	This completes the proof.
\end{proof}
%	Notice that one can also use the team properties of a logic $\mathcal L$ to determine $\MC$ and $\SAT$ more efficiently.
%	For example, for $\SAT$ of $\pdl$, it is enough to determine if there exists a satisfying singleton team for a given $\pdl$-formula.
%	This is because $\pdl$ is downwards closed.
%	As a consequence, one only has to check $2^{\variables}$-many assignments.

%Next we move on to results which are specific to propositional inclusion and independence logic. 
%\subsection{Propositional Inclusion Logic}
%We now turn to the complexity of the satisfiability and model checking problem of propositional inclusion logic.
Our main technical contributions are the following two theorems which establish that the satisfiability problem of $\pinc$ parameterized by $\arity$ is $\para\NP$-complete, and that $\SAT$ of $\pinc$ without disjunctions is tractable.
%In Section \ref{sec:mcpinc}, we examine the $\NP$-hardness proof of $\MC$ of $\pinc$ under strict semantics from \cite{HellaKMV19} and show how it can be used to obtain various $\para\NP$-completess results with a range of parameters.
%
%\begin{table}[t]
%	\centering
%	\begin{tabular}{ccc} \toprule
	%	Parameter 	& $\MC_s$ 	  &  $\SAT$     \\\midrule 
	%	\formulatw 	 
	%	& 	 \para\NP
	%	&   	 \\
	%	\formulateamtw 	 
	%	& \FPT 	
	%	&  see above	 \\
	%	\teamsize  	 	
	%	& 	\FPT 
	%	& 	-	 \\
	%	\formula 	 
	%	& \FPT 
	%	& \FPT 	 \\
	%	\formuladepth 	 	
	%	& \FPT  
	%	& \FPT 	 \\
	%	\variables   	
	%	& \FPT 
	%	& \FPT 	 \\
	%	\splits  	 	
	%	& 	\para\NP
	%	&			 \\
	%	\arity
	%	& \para\NP	
	%	& \para\NP		\\
	%	\bottomrule\\
	%	\end{tabular}
%	\caption{Complexity classification overview for \pinc.} 
%\end{table}
%
%\subsubsection{Complexity of Satisfiability}
%We begin by proving that $\SAT$ parameterized by $\arity$ is $\para\NP$-complete.
We start with the former.
The hardness follows from the $\NP$-completeness of $\pl$.
% is a fragment of $\pinc$. 
For membership, we give a non-deterministic algorithm $\mathbb A$ solving $\SAT$.
%
%\begin{algorithm}[t]
%\caption{Recursive algorithm solving $\SAT$ for $\pinc$ parameterized by \arity.}\label{alg:pinc-sat}
%\SetKwInOut{Input}{Input}
%\SetKwInOut{Output}{Output}
%\LinesNumbered
%\DontPrintSemicolon
%\Input{A $\pinc$-formula $\phi$}
%\Output{A team $T$ $T\models \phi$ if $\phi$ is satisfiable, otherwise false}
%
%\ForEach{atomic subformula $\alpha$ of $\phi$}{
	%
	%	\lIf{$\alpha=l$ is a literal }{$\recursiveteam{0}{\alpha}=\{s_\alpha \}$ such that $s_\alpha(l)=1$}
	%	\lElse{guess a team $\recursiveteam{0}{\alpha}$ of size $2^k$ at most, such that $\recursiveteam{0}{\alpha}\models \alpha$}
	%}
%
%\ForEach{$\alpha = \alpha_1\oplus \alpha_2$ where $\oplus\in \{\land,\lor\}$ s.t. given $\recursiveteam{i}{j}$ for $j\in \{\alpha_0,\alpha_1\}$}{$\recursiveteam{i}{\alpha}= \bigcup\limits_{j=\alpha_0,\alpha_1}\recursiveteam{i}{j}$}
%
%\end{algorithm}
%\todo[inline]{It would clarify if you add this nuance explicitly in the statement of the theorem. Something along the lines of: "The non-deterministic algorithm has a computation path that leads to an output of T such that $T \models \phi$ if and only if $\phi$ is satisfiable."}

\begin{theorem}\label{SAT:inc-arity}
	There is a non-deterministic algorithm $\mathbb A$ that, given a $\pinc$-formula $\phi$ with arity $k$, runs in $O(2^k\cdot p(|\phi|))$-time and outputs a non-empty team $T$ such that $T\models \phi$ if and only if $\phi$ is satisfiable.
\end{theorem}
%\begin{proof}[{Proof Idea}]
%		The algorithm $\mathbb A$ operates recursively on the syntax tree of $\phi$.
%		$\mathbb A$ begins by guessing a singleton team for $\phi$. 
%		In the subsequent steps, the teams are propagated to other subformulas and modified accordingly. 
%		It is important to observe that the size of each team is bounded exponentially by the maximum arity of inclusion atoms.
%%		Moreover, in the following steps the size only decreases.
%		Moreover, the proof works for both (strict and lax) semantics. 
%		See Appendix \ref{a:inclusion} for proof details and an example of the procedure.
%\end{proof}

\begin{proof}
	We present the proof for lax semantics first, towards the end we describe some modifications that solve the case for strict semantics.
	Given an input $\pinc$-formula $\phi$, the algorithm $\mathbb A$ operates on the syntax tree of $\phi$ and constructs a sequence of teams $\recursiveteam{i}{\psi}$ for each $\psi\in \SubForm{\phi}$ as follows.
	We let $\recursiveteam{0}{\psi}\dfn\emptyset$ for each $\psi\in\SubForm{\phi}$.
	Then, $\mathbb A$ begins by non-deterministically selecting a singleton team %(one assignment) 
	$\recursiveteam{1}{\phi}$ for $\phi$.
	For $i\geq 1$, $\mathbb A$ implements the following steps recursively.
	
	\begin{enumerate} %\addtocounter{enumi}{3}
		\item[]	For odd $i\in \mathbb N$, $f_i(\psi)$ is defined in a \emph{top-down} fashion as follows.
		\item $\recursiveteam{i}{\phi}\dfn \recursiveteam{i-1}{\phi}$ for $i\geq 3$.
		\item If $\psi = \psi_0\land\psi_1$, let $\recursiveteam{i}{\psi_0}\dfn\recursiveteam{i}{\psi}$ and $\recursiveteam{i}{\psi_1}\dfn\recursiveteam{i}{\psi}$. \label{algo:item-c}
		\item If $\psi = \psi_0\lor\psi_1$, then non-deterministically select two teams $P_0,P_1$ such that  $P_0\cup P_1 = \recursiveteam{i}{\psi}\setminus \recursiveteam{i-1}{\psi}$ and set $\recursiveteam{i}{\psi_j} \dfn \recursiveteam{i-1}{\psi_j}\cup P_j$ for $j=0,1$.\label{algo:item-d}
		\vspace{2mm}
		%	That is, the newly added assignments in the iteration ($i-1$) are split into $\recursiveteam{i}{\psi_0}$ and $\recursiveteam{i}{\psi_1}$ through non-deterministic guesses. This implies that $\recursiveteam{i}{\psi_j}\supseteq\recursiveteam{i-1}{\psi_j}$, and $\recursiveteam{i}{\psi_0}\cup\recursiveteam{i}{\psi_1}=\recursiveteam{i}{\psi}$.
		\item[] For even $i$, $\recursiveteam{i}{\psi}$ is defined in a \emph{bottom-up} fashion as follows.
		\item %\textcolor{red}{Here, for propositional atoms you just check whether the team satisfies the atom. If not reject. For inclusion atoms you would need to describe the deterministic process how the extension (partial assignments) is computed, and say that we non-deterministically pick some extension of those partial assignments} 
		If $\psi \in \SubForm\phi$ is an atomic literal, then immediately reject if $\recursiveteam{i-1}{\psi}\not\models \psi$ and set $\recursiveteam{i}{\psi}\dfn\recursiveteam{i-1}{\psi}$ otherwise.
		If $\psi \in \SubForm\phi$ is an inclusion atom, then construct $\recursiveteam{i}{\psi}\supseteq  \recursiveteam{i-1}{\psi}$ such that $\recursiveteam{i}{\psi}\models \psi$. For $\psi\dfn \inca{x}{y}$, this is done by (I) adding partial assignments $t(\tuple y)\dfn s(\tuple x)$ whenever an assignment $s$ is a cause for the \emph{failure} of $\psi$, and (II) non-deterministically selecting extensions of these assignments to the other variables.  \label{algo:item-lit}   %For atomic literals, $\recursiveteam{i}{\psi}\dfn\recursiveteam{i-1}{\psi}$ if $s\models \psi$ for each $s\in \recursiveteam{i}{\psi}$ and $\recursiveteam{i}{\psi}\dfn\emptyset$ otherwise. Whereas, if $\psi= \inca{x}{y}$, then $\recursiveteam{i}{\psi}$ is obtained by removing those assignments $s\in \recursiveteam{i-1}{\psi}$ for which there is no $t\in \recursiveteam{i-1}{\psi}$ such that $t(\tuple{y})=s(\tuple x)$.
		
		\item If $\psi = \psi_0\land\psi_1$, or $\psi = \psi_0\lor\psi_1$ let $\recursiveteam{i}{\psi}\dfn\recursiveteam{i}{\psi_0}\cup \recursiveteam{i}{\psi_1}$. \label{algo:item-cd}
		
	\end{enumerate}  
	$\mathbb A$ terminates by accepting when a fixed point is reached. That is, we obtain $j\in\mathbb N$ such that $\recursiveteam{i}{\psi}=\recursiveteam{i+1}{\psi}$ for each $\psi\in \SubForm{\phi}$ when $i\geq j$. Moreover, $\mathbb A$ rejects if Step~\ref{algo:item-lit} triggers a rejection. 
	%We claim that each step of $\mathbb A$ takes (non-deterministic) polynomial time. 
	Notice that the only step when new assignments are added is at the atomic level.
	Whereas the split in Step~\ref{algo:item-d} concerns those assignments which arise from other subformulas through union in Step~\ref{algo:item-cd}. 
	We first prove the following claim regarding %satisfying teams for atomic subformulas and 
	the overall runtime for $\mathbb A$.

	\noindent\textbf{Claim I. }$\mathbb A$ runs in at most $O(2^k\cdot p(|\phi|))$ steps for some polynomial $p$, where $k$ is the arity of $\phi$. That is, a fixed point or rejection is reached in this time.
	
	%\begin{claim}\label{lem:inc-fixedpoint}
	%	$\mathbb A$ stops in $O(2^k\cdot p(|\phi|))$-time.
	%\end{claim}
	%\begin{claimproof}
	\noindent\textbf{Proof of Claim. }
	In each iteration $i$, either $\mathbb A$ rejects, or keeps adding new assignments.
	Furthermore, new assignments are added only in the cases for inclusion atoms. 
	%	In other words, a fixed point for an inclusion atom $\alpha$ of arity $l$ is reached in at most $2^l$ iterations.
	As a result, if $\mathbb A$ has not yet reached a fixed point the reason is that some inclusion atom has generated new assignments.
	Since we take union of subteams in the bottom-up step, the following top-down iteration in Steps $2$ and $3$ may also add assignments in a subteam.
	That is, the subteams from each $\psi\in\SubForm\phi$ are propagated to other subformulas during each iteration.
	Now, each inclusion atom of arity $l\leq k$ can generate at most $2^l$ new assignments due to Step $4$ in the algorithm.
	Let $n$ denote the number of inclusion atoms in $\phi$ and $k$ be their maximum arity. 
	Then $\mathbb A$ iterates at most $2^k\cdot c  n$ times, where $c$ is some constant due to the propagation of teams to other subformulas.
	%	\todo[inline]{Jonni: The point is that new assignments are added only in the cases for inclusion atoms. Each inclusion atom can generate at most $2^l$ new assignments. If a fixed point has not been reached the reason is that some inclusion atom has generated new assignments. Thus the number of iterations is bounded by $2^{\mathrm{max arity}}\times \mathrm{number of iclusion atoms} \times \mathrm{small constant due to propagating the values generated by the atoms}$.}
	%
	This implies that, if no rejection has occured,  there is some $j\leq 2^k\cdot cn$ such that $\recursiveteam{i}{\psi}=\recursiveteam{j}{\psi}$ for each subformula $\psi\in\SubForm{\phi}$ and $i\geq j$. 
	%´	In other words, $j$ is such that $\recursiveteam{i}{\psi}=\recursiveteam{i'}{\psi}$ whenever $i,i'\geq j$. 
	We denote this fixed point by $\recursiveteam{\infty}{\psi}$ for each $\psi\in\SubForm{\phi}$.
	
	Now, we analyze the time it takes to compute each iteration.
	For odd $i\geq 1$, Steps $1$ and $2$ set the same team for each subformula and therefore take linear time.
	%	Step $3$ requires non-deterministically selecting the correct split for newly added assignments.
	Notice that the size of team in each iteration is bounded by $2^k\cdot n$.
	This holds because new assignments are added only in the case of inclusion atoms and $\mathbb A$ starts with a singleton team.
	Consequently, Step $3$ non-deterministically splits the teams of size $2^k\cdot n$ in each iteration $i$ for odd $i\geq 1$.
	Moreover, Step $4$ for even $i$ requires (1) polynomial time in $|\recursiveteam{i}{\psi}|$, if $\psi$ is an atomic literal, and (2) non-deterministic polynomial time in $2^l\cdot |\recursiveteam{i}{\psi}|$ if $\psi$ is an inclusion atom of arity $l\leq k$.
	Finally, the union in Step $5$ again requires linear time.
	This implies that each iteration takes at most a runtime of $2^k\cdot p(|\phi|)$ for some polynomial $p$.
	This completes the proof of Claim 1.
	
	%
	\iffalse (commenting out multiple lines)
	Regarding the size of each subteam, we have the following observations.
	\begin{itemize}
		\item Let $\alpha \in \SubForm{\phi}$ be atomic, 
		%		then $|\recursiveteam{0}{\psi}|\leq 1$ and $|\recursiveteam{1}{\psi}|\leq 2^k$.
		%		Moreover, since every inclusion atom in $\phi$ might increase the size of subteams due to the union in Step 5. 
		%		We have that
		%		 $|\recursiveteam{i}{\psi}|\leq 2^k\cdot |\SubForm\phi|$ for $i\in \mathbb N$.
		then $|\recursiveteam{i}{\psi}|\leq 2^k\cdot |\SubForm\phi|$ for each $i\in \mathbb N$. This hods since every inclusion atom in $\phi$ (in worst case) might increase the size of subteams due to the union in Step 5.
		\item Let $\psi =\psi_0\land\psi_1$ and $i\in \mathbb N$. Then $|\recursiveteam{i}{\psi}|=|\recursiveteam{i}{\psi_0}|=|\recursiveteam{i}{\psi_1}|$, if $i$ is odd, and $|\recursiveteam{i}{\psi}|\leq|\recursiveteam{i}{\psi_0}|+|\recursiveteam{i}{\psi_1}|$, if $i$ is even. 
		\item Let $\psi =\psi_0\lor\psi_1$. Then $|\recursiveteam{i}{\psi}|\leq|\recursiveteam{i}{\psi_0}|+|\recursiveteam{i}{\psi_1}|$ for each $i\in \mathbb N$.
		%		\item For each atomic subformula $\alpha \in \SubForm{\phi}$, the subteam $\recursiveteam{i}{\psi}$ has size at most $2^k$ for each $i\in \mathbb N$.
	\end{itemize}
	In particular, $|\recursiveteam{\infty}{\alpha}|\leq 2^k\cdot |\SubForm\phi|$ for each atomic $\alpha\in\SubForm\phi$. 
	This also implies that $|\recursiveteam{\infty}{\phi}|\leq 2^k\cdot |\SubForm\phi|^2$ since $\phi$ has at most $|\SubForm\phi|$ atomic subformulas.
	%and $|\recursiveteam{\infty}{\psi}|\leq 2^k\cdot |\SubForm\psi|$ in particular.
	Furthermore, the team $\recursiveteam{\infty}{\phi}$ is reached in at most $j\cdot |\recursiveteam{\infty}{\phi}|$ (or $2^{2k}\cdot O(|\SubForm\phi|^3)$) steps. %($2^k\cdot |\SubForm\phi|$ iterations to reach the fixed point of size $2^k\cdot O(|\SubForm\phi|)$).
	
	\fi
	%this is because the subteam for each atomic subformula is `saturated' in at most $2^k\cdot atom(\phi))$ steps. 
	%	Clearly, $j_m \in O(2^k)$ 
	%\end{claimproof}
	%We now present the correctness proof.
	%\begin{claim}
	
	We now prove that $\mathbb A$ accepts the input formula $\phi$ if and only if $\phi$ is satisfiable.	 
	%\end{claim}
	%\begin{claimproof}
	%
	Suppose that $\mathbb A$ accepts and let $\recursiveteam{\infty}{\phi}$ denote the fixed point.
	We first prove by induction that $\recursiveteam{\infty}{\psi}\models \psi$ for each subformula $\psi$ of $\phi$.
	Notice that there is some $i$ such that $\recursiveteam{\infty}{\psi}= \recursiveteam{i}{\psi}$.
	%	\begin{itemize}
		%		\item 
		The case for atomic subformulas is clear due to the Step~4 of $\mathbb A$.
		%		\item 
		For conjunction, observe that the team remains the same for each conjunct. That is, when $\psi=\psi_0\land\psi_1$ and the claim holds for $\recursiveteam{\infty}{\psi_i}$ and $\psi_i$, then $\recursiveteam{\infty}{\psi}\models \psi_0\land \psi_1$ is true.
		%		\item 
		For disjunction, if $\psi=\psi_0\lor\psi_1$ and $\recursiveteam{\infty}{\psi_i}$ are such that $\recursiveteam{\infty}{\psi_i}\models \psi_i$ for $i=0,1$, 
		then we have that $\recursiveteam{\infty}{\psi}\models \psi$ where $\recursiveteam{\infty}{\psi} = \recursiveteam{\infty}{\psi_0}\cup\recursiveteam{\infty}{\psi_1}$.
		%	\end{itemize}
	In particular $\recursiveteam{\infty}{\phi} \models \phi$ and the correctness of our algorithm follows.
	
	For the other direction,
	suppose $\phi$ is satisfiable and $T$ is a witnessing team.
	%	Moreover, assume that $T$ is a minimal 
	Then there exists a labelling function for $T$ and $\phi$, given as follows. 
	\begin{enumerate}[I.]
		\item The label for $\phi$ is $T$. 
		\item For every subformula $\psi= \psi_0\oplus \psi_1 $ with subteam label $P\subseteq T$, the subteam label for $\psi_i$ is $P_i$ ($i=0,1$) such that we have $P_0=P_1=P$, if $\oplus=\land$, and $P_0\cup P_1=P$ if $\oplus=\lor$,	
		\item $P_\psi\models \psi$ for every $\psi\in\SubForm{\phi}$ with label $P_\psi$.
	\end{enumerate}
	%	\begin{enumerate}[I.]
		%		\item $P_0=P_1=P$, if $\oplus=\land$, and
		%		\item $P_0\cup P_1=P$ if $\oplus=\lor$.
		%		\item $P\models \psi$ for every $\psi\in\SubForm{\phi}$.
		%	\end{enumerate}
	%
	%	Now, we claim that when $\mathbb A$ guesses a subteam $\{s\}$ for some assignment $s\in T$, then it accepts the formula $\phi$.
	%	We claim that this choice leads to $\mathbb A$ accepting $\phi$.
	Then we prove that there exists an accepting path when the non-deterministic algorithm $\mathbb A$ operates on $\phi$.
	We claim that when initiated on a subteam $\{s\}\subseteq T$, $\mathbb A$ constructs a fixed point $\recursiveteam{\infty}{\phi} $ and halts by accepting $ \phi$.
	Recall that $\mathbb A$ propagates teams back and forth until a fixed point is reached.
	%	For each odd $i$ and the subformula $\psi=\psi_0\lor\psi_1$, the subteam $\recursiveteam{i}{\psi_j}$ is guessed non-deterministically for $j=0,1$. 
	%	Since $T\models \psi_0\lor \psi_1$, there are $T_i$ for $i=0,1$ such that $T_0 \cup T_1=T$ and $T_i\models \psi_i$. 
	%	As a result, the non-deterministic choice of $\mathbb A$ in Step~$3$ may chose the subteams according to the split of $T$ containing $s$.
	Moreover, the new assignments are added only at the atomic level. 
	Let $\alpha\dfn \inclusion{x}{y}$ be an inclusion atom such that $\recursiveteam{i}{\alpha}\not=\emptyset$ for odd $i$, then $\mathbb A$ constructs a subteam $\recursiveteam{i+1}{\alpha} \supseteq \recursiveteam{i}{\alpha}$  
	%of size at most $2^l$ where $l=|\tuple x|=|\tuple y|$,  
	(on a non-deterministic branch) containing assignments $t$ from $P_\alpha$ such that $\recursiveteam{i+1}{\alpha}\models\alpha$.
	Since, there are at most $2^{|\tuple y|}$-many different assignments for $\tuple y$, we know that Step $4$ applies to $\alpha$ at most $2^{|\tuple y|}$ times.
	% that should be added.
	That is, once all the different assignments for $\tuple y$ have been checked in some iteration $i$: Step $4$ does not add any further assignments to $\recursiveteam{i'}{\alpha}$ for $i'\geq i+1$.
	%	Moreover, $\recursiveteam{i+1}{\alpha}\models \alpha$ after this iteration because $P_\alpha\models \alpha$.
	%	Moreover, $\recursiveteam{i}{\alpha}\subseteq P_\alpha$ for the team $\recursiveteam{i}{\alpha}$ on such a branch, as $\mathbb A$ (on this branch) can add assignments from $P_\alpha$ to $\recursiveteam{i}{\alpha$} in round $i$.
	%	Since $P_\alpha\models \alpha$, this implies that $\recursiveteam{i+1}{\alpha}=\recursiveteam{i}{\alpha}$ for atomic $\alpha$.
	%	\todo[inline]{R1: It is assumed that the addition of new assignments for inclusion atoms is done by choosing in $P_\alpha$. But then, the assumption that the size of the teams are bounded by one exponential is not true anymore as there might not be the suited assignments in $P_\alpha$. This is the point where I don't find how to resolve this easily.}
	%		
	Finally, since there is a non-empty team $T$ such that $T\models \phi$, this implies that $\mathbb A$ does not reject $\phi$ in any iteration (because there is a choice for $\mathbb A$ to consider subteams guaranteed by the labelling function).
	Consequently, $\mathbb A$ accepts by constructing a fixed point in at most $O(2^k\cdot p(|\phi|))$-steps (follows from Claim I). 
	This completes the proof and establishes the correctness.
	%	That is, $\mathbb A$ begins with a singleton team $\{s\}$ and then it adds assignments for subteams corresponding to each inclusion atoms that make them true. 
	%	Finally, we argue that in each iteration $i$, the algorithm does not stop by rejecting $\phi$.
	%	This is easy to observe since $\mathbb A$ may begin with $\recursiveteam{0}{\alpha}\subseteq P_\alpha$ as its first non-deterministic guess.
	%	Clearly, in each round $i$, there exists a choice of assignments (to be added in round $i$), such that the subteam $\recursiveteam{i}{\alpha}$ satisfies the atom $\alpha$. 
	%	That is, $\mathbb A$ selects assignments in such a way that $\recursiveteam{i}{\alpha}\subseteq P_\alpha$ for every atom $\alpha$ of $\phi$. \qedclaim
	%	\qedclaim
	%\end{claimproof}
	
	A minor variation in the algorithm $\mathbb A$ solves $\SAT$ for the strict semantics.
	When moving downwards, $\mathbb A$ needs to ensure that an assignment goes to only one side of the split.
	Moreover, since the subteams are selected non-deterministically for atomic subformulas, (in the bottom-up iteration) only subteams which can split according to the strict semantics are considered.
\end{proof}

% *************	*************	*************
% Encoding a maximal team y an Inclusion logic formula

%$\bigwedge_{i=1}^n y_1\dots y_n\subseteq x_1\dots x_{i-1} 1 x_{i+1}\dots x_n \land \bigwedge_{i=1}^n y_1\dots y_n\subseteq x_1\dots x_{i-1} 0 x_{i+1}\dots x_n $

%$$(x_1 x_2)\subseteq (x_1 1 )\land (x_1 x_2)\subseteq (x_1 0) $$
%$$T_{max}= \{11,10,01,00\} $$
%counter example, the assignment $10$ for the first conjunct.

%$\bigwedge_{i=1}^n x_1\dots x_n\subseteq x_1\dots x_{i-1} \neg x_i x_{i+1}\dots x_n$

% *************	*************	*************

%Notice that, a minor variation in the algorithm $\mathbb A$ solves $\SAT_s$.
%When moving downwards, $\mathbb A$ needs to ensure that an assignment goes to only one side of the split.
%Moreover, for each atomic subformula, since the subteam is guessed non-deterministically, at each split-junction it rejects those subteams which do not split according to the strict semantics.
%
\begin{example}
	We include an example to explain how $\mathbb A$ from the proof of Theorem~\ref{SAT:inc-arity} operates.
	Figure~\ref{fig:algo} depicts the steps of $\mathbb A$ on a formula $\phi$. %Algorithm from Theorem~\ref{SAT:inc-arity}.
	An assignment over $\{x_1,\dots,x_4\}$ is seen as a tuple of length four.
	It is easy to observe that the third iteration already yields a fixed point and that $\recursiveteam{3}{\psi}=\recursiveteam{4}{\psi}$ for each $\psi\in\SubForm{\phi}$.
	In this example, the initial guess made by $\mathbb A$ is the team $\{0110\}$.
	\begin{figure}[t]
		\centering
		\begin{tikzpicture}[scale=0.7, 
			level 1/.style={sibling distance=7.5em, level distance= 3 em},
			level 2/.style={sibling distance=4em, level distance= 4 em}, 
			level 3/.style={sibling distance=3.5em, level distance= 4 em}, 
			every node/.style={scale=0.65},
			edge from parent/.style={thin,-,black, draw},
			leaf/.style = {draw, circle}]
			\node  {\textcolor{black}{$\wedge_r$}}
			child {node {$\lor_2$}
				child {node {$x_3$}}
				child {node {$\neg x_1$}}}	
			child {node {$\lor_1$}	
				child{node {$\incas{x_3}{x_4}$ }}
				child{node {$\land_1$}
					child {node {$x_1 $}}
					child {node {$x_2$}}}
			};
		\end{tikzpicture}
		\quad
		%\begin{tikzpicture}[every node/.style={scale=0.75}]
		%\node  {\begin{tabular}{c@{\; }c@{\; }c@{\; }c@{\; }c@{\; }c@{\; }c@{\; }c}\toprule
				%Teams/$\SubForm{\phi}$ 					& $x_3$ 			& $\neg x_1$ & $\lor_2 $& 	$ \incas{x_3}{x_4}$ 	& 	$x_1\land x_2$  & $\lor_1$ & $\phi$\\ \toprule
				%{$\recursiveteam{1}{\psi}$}		
				%%		&	0110, &	0000,	&  &	 0110, 	& 1111 	&   &	 \\
				%		&	0110 & 0110		&  0110	&  0110		& &  0110	 & 0110 \\
				%	\midrule
				%{$\recursiveteam{2}{\psi}$}		
				%		& 0110	& 0110	&  0110	&  0110,		& & 0110,	 & 0110, \\
				%		&	 	& 		& 		&  1111		& & 1111 	 & 1111 \\
				%	\midrule
				%
				%{$\recursiveteam{3}{\psi}$}		
				%		&0110,	 & 0110	& 0110,	&  0110,		& & 0110,	 & 0110, \\
				%		&1111 	 & 		& 1111	&  1111		& & 1111 	 & 1111 \\
				%
				%\bottomrule
				%	\end{tabular}				
			%	};
		%\end{tikzpicture}
		%
		\begin{tikzpicture}[every node/.style={scale=0.75}]
			
			\node  {\begin{tabular}{c@{\; }c@{\; }c@{\; }c@{\; }c@{\; }c@{\; }c@{\; }c}\toprule
					$\psi$ 					& $\recursiveteam{1}{\psi}$ 			& $\recursiveteam{2}{\psi}$ & $\recursiveteam{3}{\psi}$ \\ \toprule
					{$\phi$}		&		0110 & 0110,1111  & 0110,1111	\\
					{$\lor_2$}		&		0110 & 0110  & 0110,1111	\\
					{$\lor_1$}		&		0110 & 0110,1111  & 0110,1111	\\
					{$\incas{x_3}{x_4}$}&	0110 & 0110,1111  & 0110,1111	\\
					{$x_1\land x_2$}&		 	 &   & 1111	\\
					{$x_3$} 		&		0110 & 0110  & 0110,1111	\\
					{$\neg x_1$}	&		0110 & 0110  & 0110 	\\
					
					\bottomrule
				\end{tabular}				
			};
			
			%\node  {\begin{tabular}{c@{\; }c@{\; }c@{\; }c@{\; }c@{\; }c@{\; }c@{\; }c}\toprule
					%Teams/$\SubForm{\phi}$ 					& $x_3$ 			& $\neg x_1$ & $\lor_2 $& 	$ \incas{x_3}{x_4}$ 	& 	$x_1\land x_2$  & $\lor_1$ & $\phi$\\ \toprule
					%\multirow{2}{*}{$\recursiveteam{0}{\psi}$}		
					%				&	0110, &	0000,	&  &	 0110, 	& 1111 	&   &	 \\
					%				&	0011 &	0110		&  	& 	0011 		& &  	 &\\
					%						\midrule
					%\multirow{2}{*}{$\recursiveteam{1}{\psi}$}		&	0110, &	0000,		& 0110, 0011, &	0110, 	& 1111 	& 0110, 0011 &	0110, 0011,\\
					%						&	0011 &	0110		& 0000 		& 	0011 		& &  1111  	 &	0000, 1111\\
					%						\midrule
					%\multirow{2}{*}{$\recursiveteam{2}{\psi}$}		
					%	&	0110,0011, &	0000		& 0110, 0011, &	0110, 0011 	& 1111 	& 0110, 0011 &	0110, 0011,\\
					%	&	1111 &	0110		& 0000, 1111		& 	0000 		& & 0000, 1111  	 &	0000, 1111\\
					%
					%\bottomrule
					%	\end{tabular}				
				%	};
		\end{tikzpicture} 
		\caption{The table (Right) indicates subteams for each $\psi\in \SubForm{\phi}$ (Left). The teams $\recursiveteam{1}{\psi}$ and $\recursiveteam{3}{\psi}$ are propagated top-down whereas $\recursiveteam{2}{\psi}$ is propagated bottom-up. For brevity we omit subformulas $x_1$ and $x_2$ of $x_1\land x_2$.}\label{fig:algo}
	\end{figure}
	
\end{example}
The following corollaries follow immediately from the proof of Theorem~\ref{SAT:inc-arity}.

\begin{corollary}\label{inc-arity-lemma}
	Given a $\pinc$-formula $\phi$ with arity $k$, then 
	$\phi$ is satisfiable if and only if there is a team $T$ of size at most $O(2^k\cdot p(|\phi|))$ such that $T\models \phi$. %Moreover, let $T'$ be such that $T'\models\phi$ and $|T'|\not\in O(2^k\cdot p(|\phi|))$, then one can find in polynomial time a subteam $T\subseteq T'$ s.t. $T\models \phi$ and $|T|\in O(2^k\cdot p(|\phi|))$ 
\end{corollary}

\begin{proof}
	Simulate the algorithm $\mathbb A$ from the proof of Theorem~\ref{SAT:inc-arity}. % on some subbteam $\{s\}$.
	Since $\phi$ is satisfiable, $\mathbb A$ halts in at most $O(2^k\cdot p(|\phi|))$-steps and thereby yields a team (namely, $\recursiveteam{\infty}{\phi}$) of the given size.
\end{proof}
%

%Moreover, the following upper bound for $\SAT$ when parameterized by treewidth of the formula can be established. %The following corollary to Theorem~\ref{SAT:inc-arity} presents the.

\begin{corollary}\label{sat:tw}
	$\SAT$ for $\pinc$, when parameterized by $\formulatw$ of the input formula is in $\para\NP$. 
\end{corollary}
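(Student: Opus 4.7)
My plan is to derive the corollary from Theorem~\ref{SAT:inc-arity} by showing that the arity of any $\pinc$-formula is bounded by a function of the treewidth of its syntax structure, and then applying the (folklore) para-NP analogue of Proposition~\ref{parameter-bound}.

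First, I would establish the structural claim: for any $\pinc$-formula $\phi$, one has $\arity(\phi) \leq \formulatw(\phi)/2$. The argument is that a $k$-ary inclusion atom $\alpha = \inca{x_1\ldots x_k}{y_1\ldots y_k}$ induces, via the $\DEP$ relation of Definition~\ref{def:syntax}, a clique in the Gaifman graph $G_\phi$ on the vertex set $\{\alpha, x_1, \ldots, x_k, y_1, \ldots, y_k\}$: $\DEP$ connects $\alpha$ with every variable occurring in $\alpha$, and it also connects every pair of variables appearing in the same atom. In the worst case (distinct variables) this is a clique of size $2k+1$. Since the treewidth of any graph is at least its clique number minus one, we get $\formulatw(\phi) \geq 2k$, equivalently $\arity(\phi) \leq \formulatw(\phi)/2$.

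With this bound in hand, the nondeterministic algorithm of Theorem~\ref{SAT:inc-arity} runs in time $O(2^{\arity(\phi)} \cdot p(|\phi|)) \leq O(2^{\formulatw(\phi)/2} \cdot p(|\phi|))$, which fits the definition of $\para\NP$ parameterised by $\formulatw$. The only remaining observation is the para-NP version of Proposition~\ref{parameter-bound}: if a problem lies in $\para\NP$ under parameter $k$ and $k \leq f(\ell)$ for some computable $f$, then the same nondeterministic algorithm witnesses para-NP membership under parameter $\ell$, since $g(k)\cdot p(|x|) \leq g(f(\ell))\cdot p(|x|)$.

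I do not foresee a significant obstacle here; this is a clean corollary once the clique-vs-treewidth observation is in place. The only subtlety is to verify from Definition~\ref{def:syntax} that the $\DEP$ edges really do form a clique on the variables of each atom together with the atom vertex itself, which is immediate from the definition (and the bound would only improve if one considers the dependence atom case with distinct tuple lengths, so the worst case for $\pinc$ is exactly the one used above).
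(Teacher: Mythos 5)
Your proposal is correct and follows essentially the same route as the paper: the paper's proof likewise observes that the $\DEP$ edges force all variables of an inclusion atom into a common bag of any tree decomposition, so the arity is bounded by the treewidth, and then invokes the nondeterministic algorithm of Theorem~\ref{SAT:inc-arity}. Your version is slightly more explicit (making the clique-number argument and the sharper bound $\arity \leq \formulatw/2$ precise, and spelling out the $\para\NP$ analogue of Proposition~\ref{parameter-bound}), but the underlying idea is identical.
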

\begin{proof}
	Recall the Graph structure where we allow edges between variables within an inclusion atom.
	This implies that for each inclusion atom $\alpha$, there is a bag in the tree decomposition that contains all variables of $\alpha$.
	As a consequence, a formula $\phi$ with treewidth $k$ has inclusion atoms of arity at most $k$.
	%	That is, bounding the treewidth of the formula also bounds the arity.
	Consequently, $\SAT$ parameterized by treewidth of the input formula can be solved using the $\para\NP$-time algorithm from the proof of Theorem~\ref{SAT:inc-arity}.
\end{proof}
Regarding the parameter $\splits$, the precise parameterized complexity is still open for now.
%In the hardness proof for SAT of PINC, the splits does not interact with the inclusion operators. That is, the only splits appear in the propositional logic formula of the reduced instance, the so-called flat-formulas. 
However, we prove that if there is no split in the formula, then $\SAT$ can be solved in polynomial time. 
This case is interesting in its own right because it gives rise to the so-called Poor Man's $\pinc$, similar to the case of Poor Man's $\pdl$ \cite{DBLP:conf/sofsem/EbbingL12,DBLP:journals/sLogica/LohmannV13,MeierR18}. 
The model checking for this fragment is in $\Ptime$; this follows from the fact that $\MC$ for $\pinc$ with lax semantics is in $\Ptime$.
In the following, we prove that $\SAT$ for Poor Man's $\pinc$ is also in $\Ptime$.
%Up to our knowledge, $\SAT$ for $\pinc$-formulas without splits has not been studied before in the propositional setting.
%
\begin{theorem}\label{thm:no-split}
	There is a deterministic algorithm $\mathbb B$ that given a $\pinc$-formula $\phi$ with no splits runs in $\Ptime$-time and accepts if and only if $\phi$ is satisfiable.
\end{theorem}
%
%\begin{proof}[{Proof Idea}]
%	$\mathbb B$ begins by labelling all $\pl$-literals in $\phi$ to values that satisfy them, namely $x=1$ for $x$ and $x=0$ for $\neg x$.
%	Then it propagates these values according to the following rule: if $\incas{p_1\ldots p_n}{q_1\ldots q_n}$ is an inclusion atom such that $q_i$ is labelled (for some $i$), then $p_i$ is given the same label that $q_i$ has. The propagation step is executed until there is nothing left to propagate, or until some variable is labelled with both $0$ and $1$. 
%	See Appendix \ref{a:inclusion} for proof details and an example.
%%	In the former case, the algorithm accepts, and in the latter, it rejects.
%\end{proof}

\begin{proof}
	We give a recursive labelling procedure ($\mathbb B$) that runs in polynomial time and accepts if and only if $\phi$ is satisfiable.
	The labelling consists of assigning a value $c\in\{0,1\}$ to each variable $x$. 
	\begin{enumerate}
		\item Begin by labelling all $\pl$-literals in $\phi$ by the value that satisfies them, namely $x=1$ for $x$ and $x=0$ for $\neg x$.
		\item For each inclusion atom $\inca{p}{q}$ and a labelled variable $q_i\in \tuple q$, label the variable $p_i\in \mathbf p$ with same value $c$ as for $q_i$. Where $p_i$  appears in $\tuple p$ at the same position, as $q_i$ in $\tuple q$. 
		%		\todo[inline]{Jonni: Does 3 just say that iterate 2. until it is not anymore applicable? -- Yasir: Yes this is true}
		\item Propagate the label for $p_i$ from the previous step. That is, consider $p_i$ as a labelled variable and repeat Step $2$ for as long as possible.
		\item If 
		%there is an inclusion atom $\inca{p}{q}$ such that for some $i$, $p_i\in\mathbf p$ and $q_i\in \mathbf q$ are labelled with opposite values in $\{0,1\}$, or if 
		some variable $x$ is labelled with two opposite values, then reject. Otherwise, accept.
	\end{enumerate}
	%\begin{claim}
	%	$\mathbb B$ runs in polynomial time and accepts if and only if $\phi$ is satisfiable.	
	%\end{claim}
	%	\begin{claimproof}
		The fact that $\mathbb B$ works in polynomial time is clear because each variable is labelled at most once. 
		If a variable is labelled to two different values, then it gives a contradiction and the procedure stops.
		
		For the correctness, notice first that if $\mathbb B$ accepts then we have a partition of $\VAR(\phi)$ into a set $X$ of labelled variables and a set $Y=\VAR(\phi)\backslash X$.
		When $\mathbb B$ stops, due to step 3, $\phi$ does not contain an inclusion atom $\inca{p}{q}$ such that $q_i\in\mathbf q $ and $p_i\in \mathbf p$ for some $q_i\in X,p_i\in Y$, 
		where $p_i$  appears in $\tuple p$ at the same position as $q_i$ in $\tuple q$. 
		%		\todo[inline]{Jonni: What is the restriction $s{\upharpoonright}Y\in 2^Y$? $s{\upharpoonright}X \text{ satisfies each label in X}$ = every variable in $X$ is mapped to its label?}
		%		\todo[inline]{Jonni: I meant that the below should probably be written as $T= \{s \in 2^{\VAR(\phi)} \mid \text{$x$ is labelled with $s(x)$, for each $x\in X$} \}$. Do we need to care here whether $x$ is labelled with both $0$ and $1$.}
		Let $T= \{s \in 2^{\VAR(\phi)} \mid \text{$x$ is labelled with $s(x)$, for each $x\in X$} \}$.
		Since $\mathbb B$ accepts, each variable $x\in X$ has exactly one label and therefore assignments in $T$ are well-defined. 
		Moreover $T$ includes all possible assignments over $Y$.
		%$T= \{s \mid s{\upharpoonright}Y\in 2^Y\, \text{ and } s{\upharpoonright}X \text{ satisfies each label in X} \}$.
		One can easily observe that $T\models \phi$.
		$T$ satisfies each literal because each $s\in T$ satisfies it.
		Let $\inca{p}{q}$ be an inclusion atom and $s\in T$ be an assignment.
		We know that for each $x\in \mathbf q$ that is fixed by $s$, the corresponding variable $y\in \mathbf p$ is also fixed,
		whereas, $T$ contains every possible value for variables in $\mathbf q$ which are not fixed.
		This makes the inclusion atom true.
		
		To prove the other direction, suppose that $\mathbb B$ rejects.
		Then there are three cases under which a variable contains contradictory labels.
		Either both labels of the variable are caused by a literal (Case 1), or inclusion atoms are involved in one (Case 2), or both (Case 3) labels.
		In other words, either $\phi$ contains $x\land \neg x$ as a subformula, or it contains $x\land \neg y$ and there is a sequence of inclusion atoms, such that keeping $x=1$ and $y=0$ contradicts some inclusion atoms in $\phi$ (see Figure~\ref{fig:atoms}).
		%		\todo[inline]{Jonni: Three options for a variable to be labelled with both truth values. 1) both were labelled due to a literal, 2) one was labelled due to a literal and another by inclusion atom, 3) both were labelled by inclusion atom.}
		\begin{description}
			\item[Case 1] Both labels of a variable $x$ are caused by a literal. 
			In this case, $x$ takes two labels because $\phi$ contains $x\land \neg x$. The proof is trivial since $\phi$ is unsatisfiable.
			
			%		That is, there are variables $x,y$ such that $x\land\neg y\in \phi$ and one of the following two cases occur.
			%		Intuitively, there is a sequence of inclusion atoms, such that keeping $x=1$ and $y=0$ contradicts some inclusion atoms (see Figure~\ref{fig:atoms}).
			\item[Case 2] One label of a variable $y$ is caused by a literal ($\neg y$ or $y$) and the other by inclusion atoms. 
			Then, there are inclusion atoms $\incas{\mathbf p_j}{\mathbf q_j}$ and variables $z_j$ for $j\leq n$ such that: $z_0=x$, $z_n=y$, and  $z_j$ and $z_{j+1}$ occur in the same position in $\mathbf q_{j}$ and  $\mathbf p_{j}$, respectively, for $0\leq j< n$. 
			%where $\tuple p_0= \{x\}$ and $\tuple q_{n+1}=\{y\}$. Where, for each $j\leq n$, $z_j$  appears in $\tuple q_j$ at the same position, as $z_{j+1}$ in $\tuple p_j$.
			%			 That is, variables $z_j$ and $z_{j+1}$ are related in the sense of an inclusion atom $\incas{\mathbf p_j}{\mathbf q_j}$. 
			This implies that $\phi$ is not satisfiable since for any team $T$ such that $T\models x\land \neg y$, $T$ does not satisfy the subformula  $\bigwedge_j \incas{\mathbf p_j}{\mathbf q_j}$ of $\phi$.
			A similar reasoning applies if $\phi$ contains $\neg x\land y$ instead.
			\item[Case 3] Both labels of a variable $v$ are caused by inclusion atoms.
			Then, there are two collections of inclusion atoms $\incas{\mathbf p_j}{\mathbf q_j}$ for $j\leq n$, and $\incas{\mathbf r_k}{\mathbf s_k}$ for $k\leq m$.
			Moreover, there are two sequences of variables $z^x_j$ for $j\leq n$ and $z^y_k$ for $k\leq m$, and a variable $v$ such that, $z^x_0=x$, $z^y_0=y$, $z^x_n=v=z^y_m$, and
			\begin{enumerate}
				\item for each $j\leq n$, $z^x_j$ appears in $\tuple q_j$ at the same position, as $z^x_{j+1}$ in $\tuple p_j$,
				\item for each $k\leq m$, $z^y_k$ appears in $\tuple s_k$ at the same position, as $z^y_{k+1}$ in $\tuple r_k$.
			\end{enumerate}
			This again implies that $\phi$ is not  satisfiable since for any $T$ such that $T\models x\land \neg y$, it does not satisfy the subformula $\bigwedge_j \incas{\mathbf p_j}{\mathbf q_j}\land\bigwedge_k \incas{\mathbf r_k}{\mathbf s_k}$  of $\phi$.
		\end{description}
		Consequently, the correctness follows. 
		%	\end{claimproof}
	%	
	\begin{figure}[t]
		\centering
		\begin{tikzpicture}[auto, sibling distance= 2cm, level distance = 1.5 cm, scale=.7, node1/.style = {scale=0.6}, edge/.style = {} ] 
			
			\node at (0.5,4) (x0) {$x$};
			\node at (1.5,4) (x1) {$z_1$};
			\node at (3,4) (x2) {$z_2$};
			\node at (4.5,4) (x3) {$\ldots$};
			\node at (6,4) (x4) {$z_{n-1}$};
			\node at (7.3,4) (x5) {$y$};

			\node at (9.2,4.5)  (z0) {$x$};
			\node at (10.2,4.5)(z1) {$z^x_1$};
			\node at (11.4,4.5)(z2) {$z^x_2$};
			\node at (12.6,4.5)(z3) {$\ldots$};
			\node at (14,4.5)  (z4) {$z^x_{n-1}$};
			\node at (15.3,4.5)(z5) {$v$};

			\node at (9.2,3.5)   (y0) {$ y$};
			\node at (10.2,3.5)(y1) {$z^y_1$};
			\node at (11.6,3.5)(y2) {$z^y_{2}$};
			\node at (12.8,3.5)(y3) {$\ldots$};
			\node at (14,3.5)  (y4) {$z^y_{m-1}$};
			\node at (15.3,3.5)(y5) {$v$};

			\draw (x0) to node {} (x1) ; 
			\draw (x1) to node {} (x2) ; 
			\draw (x2) to node {} (x3) ; 
			\draw (x3) to node {} (x4) ; 
			\draw (x4) to node {} (x5) ; 
			
			\draw (y0) to node {} (y1) ; 
			\draw (y1) to node {} (y2) ; 
			\draw (y2) to node {} (y3) ; 
			\draw (y3) to node {} (y4) ; 
			\draw (y4) to node {} (y5) ;

			\draw (z0) to node {} (z1) ; 
			\draw (z1) to node {} (z2) ; 
			\draw (z2) to node {} (z3) ; 
			\draw (z3) to node {} (z4) ; 
			\draw (z4) to node {} (z5) ; 
		\end{tikzpicture}
		\caption{Intuitive explanation of two cases in the proof. (Left)  $x$ and $\neg y$ propagate a conflicting value to eachother. (Right) $x$ and $\neg y$ propagate conflicting values to $v$.}\label{fig:atoms}
		%Intuitive explanation of two cases in the proof. (Left)  $x=1$ propagates a conflicting value to $\neg y$, similarly $y=0$ propagates a conflicting value to $x$. (Right) $x$ and $\neg y$ propagate conflicting values to a variable $v$.
	\end{figure}
	This completes the proof.
\end{proof}

\begin{example}\label{ex:no-split}
	We include an example to highlights how $\mathbb B$ operates.
	Let $\phi\dfn (x_1\land x_2 \land \neg x_4)\land (x_1x_3\subseteq x_5x_2)\land (x_5\subseteq x_4)$.
	The table in Figure~\ref{fig:algo-no-split} illustrates the steps of $\mathbb B$ on $\phi$.
	Clearly, $\mathbb B$ rejects $\phi$ since the variable $x_1$ has conflicting labels. %with opposite values $1$ and $0$. 
	%	Moreover, it is easy to observe that $\phi$ is unsatisfiable.
	%	Moreover, it is easy to observe that the algorithm $\mathbb A$ from the proof of Theorem~
	%	For brevity, an assignment over $\{x_1,\dots,x_4\}$ is seen as a tuple of length four.
	%	It is easy to observe that the third iteration already yields a fixed point and that $\recursiveteam{3}{\psi}=\recursiveteam{4}{\psi}$ for each $\psi\in\SubForm{\phi}$.
	%	In this example, the initial guess made by $\mathbb A$ is the team $\{0110\}$.
	\begin{figure}[t]
		\centering
		\begin{tikzpicture}[every node/.style={scale=1}]
			\node  {\begin{tabular}{l@{\; }c@{\; }c@{\; }c@{\; }c@{\; }c@{\; }c@{\; }c}\toprule
					$\VAR{(\phi)}$ 	& $x_1$	& $ x_2$ & $x_3 $ & 	$ x_4$ & $x_5$ \\ \toprule
					Labels for $\VAR{(\phi)}$ 		
					&	1 & 1		&  	&  0		&  \\
					
					Propagation	due to $\incas{x_5}{x_4}$	
					& 1 & 1	&  	& 0	& 0  \\
					Propagation	due to $\incas{x_1x_3}{x_5x_2}$	
					& 1/0 & 1	& 1 	& 0	& 0 \\

					\bottomrule
				\end{tabular}				
			};
			
		\end{tikzpicture} 
		\caption{Labels for literals and their propagation to inclusion atoms (See Example~\ref{ex:no-split}).}\label{fig:algo-no-split}
	\end{figure}
\end{example}

\noindent The $\FPT$ cases for $\SAT$ of $\pinc$ follow from Theorem~\ref{SAT:general}.
Regarding $\MC$, recall that we consider strict semantics alone. 
The results of Theorem~\ref{thm:mc-many} are obtained from the reduction for proving $\NP$-hardness of $\MC_s$ for $\pinc$ \cite{HellaKMV19}.
Here we confirm that their reduction is indeed an \emph{fpt}-reduction with respect to considered parameters.
The following lemma is essential for proving Theorem~\ref{thm:mc-many} and we include it for self containment. 
%
%\begin{restatelemma}[mc-lem-pinc]
\begin{lemma}[{\cite%[Lem.~3.8]
		{HellaKMV19}}]\label{lem:NPhard}
	$\MC$ for $\pinc$ under strict semantics is $\NP$-hard.
\end{lemma}\label{mc-lem-pinc}
%\end{restatelemma}
%
\begin{proof}[Proof Idea]
	The hardness is achieved through a reduction from the set splitting problem to the model checking problem for $\pinc$ with strict semantics.
	An instance of set splitting problem consists of a family $\mathcal F$ of subsets of a finite set $S$.
	The problem asks if there are $S_1,S_2\subseteq S$ such that $S_1\cup S_2 =S$, $S_1\cap S_2 =\emptyset$ and for each $A\in \mathcal F$ there exists $a_1,a_2\in \calA$ such that $a_1\in S_1,a_2\in S_2$.
	Let $\mathcal F= \{B_1,\ldots, B_n\}$ and $\bigcup \mathcal F =S = \{a_1,\ldots, a_k\}$.
	Let $p_i$ and $q_j$ denote fresh variables for each $a_i\in  S$ and $B_j\in \mathcal F$.
	Moreover, let $ V_{\mathcal F}=\{p_1,\ldots, p_k, q_1,\ldots, q_n, p_{\top},p_c,p_d \}$.
	Then define $T_{\mathcal F}=\{s_1,\ldots, s_k,s_c,s_d\}$, where each assignment $s_i$ is defined as follows:
	\[
	s_i(p) \dfn \begin{cases}
		1 , \text{ if } p=p_i \text{ or } p=p_\top,&\\
		1, \text{ if } p=q_j \text{ and } a_i\in B_j \text{ for some } j, &\\
		0, \text{ otherwise.}
	\end{cases}\label{abcd}
	\]
	That is, $T_\mathcal F$ includes an assignment $s_i$ for each  $a_i\in S$.
	The reduction also yields the following $\pinc$-formula.
	\[
	\phi_\mathcal{F} \dfn (\neg p_c \land \bigwedge\limits_{i\leq n}p_\top \subseteq q_i)\lor  (\neg p_d \land \bigwedge\limits_{i\leq n}p_\top \subseteq q_i)
	\]
	Clearly, the split of $T_\mathcal F$ into $T_1,T_2$  ensures the split of $S$ into $S_1$ and $S_2$ and vice versa.
	Whereas, $s_c$ and $s_d$ ensure that none of the split is empty.  
\end{proof}

\begin{theorem}\label{thm:mc-many}
	$\MC_s$ for $\pinc$ when parameterized by $k$ is $\para\NP$-complete if $k\in \{\splits, \arity, \formulatw \}$. 
	Whereas, it is $\FPT$ in other cases.
	%Whereas, it is $\FPT$ if $k\in\{ \formulateamtw, \linebreak\teamsize, \formuladepth, \variables,\formula\}$.
\end{theorem}

\begin{proof}	
	Consider the $\pinc$-formula $\phi_\mathcal{F}$ from Lemma~\ref{lem:NPhard}, which includes only one split-junction and the inclusion atoms have arity one. 
	This gives the desired $\para\NP$-hardness for $\MC_s$ when parameterized by $\splits$ and $\arity$.

	The proof for $\formulatw$ is more involved and we prove the following claim.
	\begin{claim}\label{claim:MC-tw}
		$\phi_\mathcal{F}$ has fixed $\formulatw$. That is, the treewidth of $\phi_\mathcal{F}$ is independent of the input instance $\mathcal{F}$ of the set-spliting problem. 
		Moreover $\formulatw(\phi_\mathcal F)\leq 4$. 
	\end{claim}
	\begin{claimproof} 
		The $\pinc$-formula $\phi_{\mathcal F}$ is related to an input instance ${\mathcal F}$ of the set splitting problem only through its input size, which is $n$. 
		Therefore the formula structure remains unchanged when we vary an input instance, only the size of two big conjunctions vary. 
		To prove the claim, we give a tree decomposition for the formula with $\formulatw(\phi_\mathcal{F})=4$.
		Since the treewidth is minimum over all tree decompositions, this proves the claim.
		We rewrite the formula as below.
		\[
		\phi_\mathcal{F} \dfn (\neg p_c \land_l \bigwedge\limits_{i\leq n}p_\top \subseteq^l_i q_i)\lor  (\neg p_d \land_r \bigwedge\limits_{i\leq n} p_\top \subseteq^r_i q_i)
		\]
		That is, each subformula is renamed so that it is easy to identify as to which side of the split it appears (e.g., $p_\top\subseteq^l_i q_i$ denotes the $i$th inclusion atom in the big conjunction on the left,  denoted as $I^l_i$ in the graph).
		The graphical representation of $\phi_{\mathcal F}$ with $V= \SubForm{\phi_{\mathcal F}}\cup \VAR(\phi_{\mathcal F})$, as well as, a tree decomposition, is given in Figure~\ref{fig:tw}.
		Notice that there is an edge between $x$ and $y$ in the Gaifman graph if and only if either $y$ is an immediate subformula of $x$, or $y$ is a variable appearing in the inclusion atom $x$.
		It is easy to observe that the decomposition presented in Figure~\ref{fig:tw} is indeed a valid tree decomposition in which each node is labelled with its corresponding bag.
		Moreover, since the maximum bag size is $5$, the treewidth of this decomposition is $4$. 
		This proves the claim. \qedclaim
		\begin{figure}[t]
			\centering
			\begin{tikzpicture}[rotate=90,scale=.5, every node/.style={scale=0.7}]
				\node (qn) at (-0.2,0) {$q_n$};
				\node (d1) at (-0.2,0.9) {\small$\ldots$};
				\node (q2) at (-0.2,1.6) {$q_2$};
				\node (q1) at (-0.2,2.6) {$q_1$};
				
				\node (pt) at (2.5,1.5) {$p_\top$};
				
				\node (in) at (1.2,1.9) {\small$I^l_n$};
				\node (d2) at (1.2,2.5) {\scriptsize$\tiny\ldots$};
				\node (i2) at (1.2,3.2) {\small$I^l_2$};
				\node (i1) at (1.2,4) {\small$I^l_1$};
				
				\node (rin) at (1.2,-1.1) {\small$I^r_n$};
				\node (rd2) at (1.2,-0.5) {\scriptsize$\ldots$};
				\node (ri2) at (1.2,0.1) {\small$I^r_2$};
				\node (ri1) at (1.2,0.9) {\small$I^r_1$};
				
				\node (al) at (4,3) {$\bigwedge_L$};
				\node (ar) at (4,0) {$\bigwedge_R$};
				\node (ngr) at (5.3,-1) {$\neg_r$};
				\node (ngl) at (5.3,4) {$\neg_l$};
				
				\node (pc) at (4,-1) {$p_d$};
				\node (pd) at (4,4) {$p_c$};
				
				\node (wl) at (5.3,3) {$\wedge_l$};
				\node (wr) at (5.3,0) {$\wedge_r$};
				\node (l) at (6.6,1.5) {$\vee$};

				\foreach \f/\t in {pt/i1, pt/i2,pt/in,pt/ri1, pt/ri2,pt/rin,q1/i1,q2/i2,qn/in, q1/ri1,q2/ri2,qn/rin, pc/ngr,pd/ngl, al/i1, al/i2, al/in, ar/rin, ar/ri1, ar/ri2, ar/rin, ar/wr, al/wl, ngl/wl, ngr/wr, wr/l,wl/l}{
					\draw[-] (\f) -- (\t);
				}
				
			\end{tikzpicture}
			\qquad
			\begin{tikzpicture}[scale=0.55, rounded corners, rectangle, 
				level 1/.style={sibling distance=9em, level distance= 3 em},
				level 2/.style={sibling distance=8em, level distance= 4 em}, 
				level 4/.style={sibling distance=5.7em, level distance= 4 em},
				every node/.style={draw, scale=0.6},
				edge from parent/.style={thin,-,black, draw}]
				\node  {\textcolor{black}{$\wedge_l,\wedge_r,\lor$}}
				child {node {$\wedge_l,\wedge_r$} 
					child {node {$\wedge_l,\neg_l$}
						child {node {$\neg_l,p_c$}}}		
					child {node {$\wedge_l, \bigwedge_L, \wedge_r, \bigwedge_R$}	
						child{node {$\bigwedge_L, \bigwedge_R, p_\top$}
							child{node {\small$\bigwedge_L, \bigwedge_R, p_\top, I^l_1,I^r_1$}
								child{node {\small$p_\top, I^l_1,I^r_1,q_1$}}}
							child[] {node[draw=white] {\small$\ldots$}
								child[]{node[draw=white] {\small$\ldots$}}}
							child {node {\small$\bigwedge_L, \bigwedge_R, p_\top, I^l_n,I^r_n$}
								child{node {\small$p_\top,I^l_n,I^r_n,q_n$}}}}}
					child {node {$\wedge_r,\neg_r$}
						child {node {$\neg_r,p_d$}}}};
			\end{tikzpicture}
			
			\caption{The Gaifman graph (Left) and a tree decomposition (Right) for $\phi_{\mathcal F}$. Note that we abbreviated subformulas in the inner vertices of the Gaifman graph for presentation reasons. Also, edges between $p_\top$ and variables $q_i$ are omitted for better presentation, but those are covered in the decomposition on the right.}\label{fig:tw}
		\end{figure}
	\end{claimproof}
	The remaining $\FPT$-cases for $\MC_s$ follow from Theorem~\ref{MC:general} and Corollary~\ref{MC:cor}.
	This completes the proof to our theorem.
\end{proof} 

Recall that a dependence atom $\depa{x}{y}$ is equivalent with the independence atom $\indepa{y}{y}{x}$.
%That is, for every $\pdl$-formula $\phi$, there is a $\pind$-formula $\psi$, such that $\phi\equiv \psi$.
As a consequence, (in the classical setting) hardness results for $\pdl$ immediately translate to those for $\pind$.
Nevertheless, in the parameterized setting, one has to further check whether this translation `respects' the parameter value of the two instances.
This concerns the parameter $\arity$ and $\formulatw$. 
This is due to the reason that, a dependence atom $\depa{x}{y}$ has arity $|\mathbf{x}|$, whereas, the equivalent independence atom $\indepa{y}{y}{x}$ has arity $|\mathbf{x}\cup \mathbf{y}|$.
%\subsection{Complexity of Model Checking}

\begin{theorem}\label{thm:ind-mc}
%For $\pind$, 
$\MC$ for $\pind$, when parameterized by $k$ is $\para\NP$-complete if $k\in \{\splits,\arity,\formulatw \}$. Whereas, it is $\FPT$ in other cases.
%if $k\in\{ \formulateamtw, \linebreak\teamsize, \formuladepth, \variables,\formula\}$.
\end{theorem}

\begin{proof}
Notice that $\MC$ for $\pdl$ when parameterized by $k\in \{\arity,\splits, \linebreak\formulatw \}$ is also $\para\NP$-complete.
We argue that in reductions for $\pdl$, replacing dependence atoms by the equivalent independence atoms yield \emph{fpt}-reduction for the above mentioned cases.
Moreover, this holds for both strict and lax semantics.
% because the complexity of $\pdl$ is same for both semantics.

%	Follows from results for $\pdl$ \cite{jpdl} and the reduction that transforms dependence atoms into independence atoms.

For $\formulatw$ and $\arity$, when proving $\para\NP$-hardness of $\pdl$, the resulting formula has treewidth of one \cite[Cor.~16]{jpdl} and the arity is zero \cite[Thm.~15]{jpdl}. 
Moreover, only dependence atoms of the form $\depas{}{p}$ where $p$ is a propositional variable, are used and the syntax structure of the $\pdl$-formula is already a tree.
Consequently, replacing $\depas{}{p}$ with $\indepas{p}{p}{\emptyset}$ implies that only independence atoms of arity $1$ are used. 
Notice also that replacing dependence atoms by independence atoms does not increase the treewidth of the input formula. 
This is because when translating dependence atoms into independence atoms, no new variables are introduced. 
As a result, the reduction also preserves the treewidth. 
This proves the claim as $1$-slice regarding both parameters $\arity$ and $\formulatw$, is $\NP$-hard.

Regarding the $\splits$, the claim follows due to Mahmood and Meier~\cite[Thm.~18]{jpdl} because the reduction from the colouring problem uses only 2 splits.

Finally, the $\FPT$ cases follow from Theorem~\ref{MC:general} and Corollary~\ref{MC:cor}.
\end{proof}

\begin{theorem}\label{thm:ind-sat}
$\SAT$ for $\pind$,  parameterized by $\arity$ is $\para\NP$-complete.
Whereas, it is $\FPT$ in other cases.
%when parameterized by $k$ if $k\in\{\formulatw, \formula, \linebreak \formuladepth, \variables,\splits\}$.
\end{theorem}
\begin{proof}
Recall that $\pl$ is a fragment of $\pind$.
%		Moreover, a $\pl $-formula $\phi$ is satisfiable if and only if it is satisfied by a singleton team $\{s\}$.
%		That is, the truth value of $\phi$ is determined by the usual Tarski semantics.
This immediately gives $\para\NP$-hardness when parameterized by $\arity$, because $\SAT$ for $\pl$ is $\NP$-complete.
%		As a consequence, $\para\NP$-hardness when parameterized by the $\arity$ and the $\teamsize$ follows due to the $\NP$-hardness of $\SAT$ for $\pl$.
The $\para\NP$-membership is clear since $\SAT$ for $\pind$ is also $\NP$-complete~\cite[Thm.~1]{Hannula15c}. 
The $\FPT$ cases for $k\in\{\formuladepth,\variables\}$ follow because of Theorem~\ref{SAT:general}\,. 
%		For the case of $\variables$ notice first that we only need to check singleton teams, similar to the case of $\pdl$ \cite[Thm.~24-25]{jpdl}.
%		One can enumerate all of the $2^{\variables}$-many singleton-teams in $\FPT$-time and determine whether any of these satisfies the input formula.
The cases for $\splits$ and $\formulatw$ follow due to a similar reasoning as in $\pdl$ \cite{jpdl} because it is enough to find a singleton satisfying team \cite[Lemma 4.2]{HannulaKVV18}.
This completes the proof.
\end{proof}

\section{Concluding Remarks}
We presented a parameterized complexity analysis for $\pinc$ and $\pind$.
The problems we considered were satisfiability and model checking.
Interestingly, the parameterized complexity results for $\pind$ coincide with that of $\pdl$~\cite{jpdl} in each case.
Moreover, the complexity of model checking under a given parameter remains the same for all three logics. 
We proved that for a team based logic $\mathcal L$ such that $\mathcal L$-atoms can be evaluated in $\Ptime$-time, $\MC$ for $\mathcal L$ when parameterized by $\teamsize$ is always $\FPT$.

It is interesting to notice that for $\pdl$ and $\pind$, $\SAT$ is easier than $\MC$ when parameterized by $\formulatw$.
This is best explained by the fact that $\pdl$ is downwards closed and a formula is satisfiable iff some singleton team satisfies it.
Moreover, $\pind$ also satisfies this `satisfiable under singleton team' property.
%The parameters $\teamsize$ and $\arity$ regarding $\SAT$ behave surprisingly in the logics we considered.  
%Namely, for $\pinc$, $\SAT$ under $\teamsize$ is not meaningful due to the reason that we do not impose a size restriction for the satisfying team in $\SAT$.
%Consequently, the parameter $\teamsize$ does not provide any `useful information' to solve satisfiability.
%On the other hand, for $\pind$ and $\pdl$, the problem parameterized by $\teamsize$ is $\para\NP$-complete.
%One possible explanation is that, $\pdl$ is downwards closed and a formula is satisfiable iff some singleton team satisfies it.
%However, $\pind$ also satisfies this `satisfiable under singleton team' property. 
%On the other hand, $\pinc$ is not downwards closed and the satisfiability of a $\pinc$-formula $\phi$ might generate a team of exponential size.
%
The parameters $\teamsize$ and $\formulateamtw$ are not meaningful for $\SAT$ due to the reason that we do not impose a size restriction for the satisfying team in $\SAT$. 
Furthermore, $\arity$ is quite interesting because $\SAT$ for all three logics is $\para\NP$-complete.
This implies that while the fixed $\arity$ does not lower the complexity of $\SAT$ in $\pdl$ and $\pind$, it does lower it from $\EXP$-completeness to $\NP$-completeness for $\pinc$.
As a byproduct, we obtain that the complexity of satisfiability for the fixed arity fragment of $\pinc$ is $\NP$-complete.
Thereby, we answer an open question posed by Hella and Stumpf~\cite[P.13]{HellaStumpf15}.
The $\para\NP$-membership of $\SAT$ when parameterized by $\arity$ implies that one can encode the problem into classical satisfiability and employ a SAT-solver to solve satisfiability for the fixed arity fragment of $\pinc$. 
We leave as a future work the suitable SAT-encoding for $\pinc$ that runs in $\FPT$-time and enables one to use SAT-solvers.
Further future work involves finding the precise complexity of $\SAT$ for $\pinc$ when parameterized by $\splits$ and $\formulatw$.

%Then one can proceed with the parameterized complexity of validity and entailment in these logics.
%Hella and Stumpf (\cite{HellaStumpf15}) posed an interesting question to determine the complexity of satisfiability for modal inclusion logic with only unary inclusion atoms. 
%In Theorem~\ref{SAT:inc-arity} we proved that the fixed-arity fragment of $\pinc$ has the same complexity for $\SAT$ as $\pl$ ($\NP$-complete). 
%It might be interesting to explore whether the same complexity drop occurs for  modal inclusion logic.

\bibliography{main.bib}

\end{document}